\newcolumntype{P}[1]{>{\centering\arraybackslash}p{#1}}
\newcolumntype{M}[1]{>{\centering\arraybackslash}m{#1}}
\begin{document}

\title{The Computational Landscape of Autonomous Mobile Robots: The Visibility Perspective  }

\author{Archak Das\orcidID{0000-0002-1630-3052} \and Satakshi Ghosh\orcidID{0000-0003-1747-4037}\and
Avisek Sharma\orcidID{0000-0001-8940-392X}\and
Pritam Goswami\orcidID{0000-0002-0546-3894}\and
Buddhadeb Sau\orcidID{0000-0001-7008-6135} }

\authorrunning{A. Das, S. Ghosh, A. Sharma, P. Goswami, and B. Sau}

\institute{Department of Mathematics, Jadavpur University, Kolkata, India \\
\email{\{archakdas.math.rs, satakshighosh.math.rs, aviseks.math.rs, pritamgoswami.math.rs,  buddhadeb.sau\}@jadavpuruniversity.in}
}

\maketitle              

\begin{abstract}
Consider a group of autonomous mobile computational entities called robots. The robots move in the Euclidean plane and operate according to synchronous $Look$-$Compute$-$Move$ cycles. The computational capabilities of the robots under the four traditional models $\{ \mathcal{OBLOT},\  \mathcal{FSTA},\  \mathcal{FCOM},\  \mathcal{LUMI} \} $ have been extensively investigated both when the robots had unlimited amount of energy and when the robots were energy-constrained. 

In both the above cases, the robots had full visibility. In this paper, this assumption is removed, i.e., we assume that the robots can view up to a constant radius $V_r$ from their position (the $V_r$ is same for all the robots) and, investigates what impact it has on its computational capabilities.

We first study whether the restriction imposed on the visibility has any impact at all, i.e., under a given model and scheduler does there exist any problem which cannot be solved by a robot having limited visibility but can be solved by a robot with full visibility. We find that the answer to the question in general turns out to be positive. Finally, we try to get an idea that under a given model, which of the two factors, $Visibility$ or $Synchronicity$ is more powerful and conclude that a definite conclusion cannot be drawn.


\keywords{ Mobile Robots  \and Limited Visibility \and Look-Compute-Move \and Robots with Lights }
\end{abstract}
\section{Introduction}
In this paper, we consider systems of $autonomous$, $anonymous$, $identical$ and $homogenous$ computational entities called $robots$ moving and operating in the Euclidean plane. The robots are viewed as points and they operate according to the traditional $Look$-$Compute$-$Move$ ($LCM$) cycles in synchronous rounds. In $Look$ phase robots take a snapshot of the space, in the $Compute$ phase the robots execute its algorithm using the snapshot as input, then move to the computed destination in $Move$ phase. The robots are collectively able to perform some tasks and solve some problems.


In recent times, exhaustive investigation has also been done \cite{FlocchiniSW19, BuchinFKPSW21} about the issues of $memory$ $persistence$ and $communication$ $capability$, have on the solvability of a problem and computational capability of a system. In light of these facts, four models have been identified and investigated, $\mathcal{OBLOT}$, $\mathcal{LUMI}$, $\mathcal{FSTA}$ and $\mathcal{FCOM}$. 

In the most common model $\mathcal{OBLOT}$, in the addition to standard assumptions of $anonymity$ and $uniformity$, the robots are $silent$, i.e., without explicit means of communication, and $oblivious$, i.e., they have no persistent memory to record information of previous cycles.

The other model which is generally considered as antithesis to $\mathcal{OBLOT}$ model, is the $\mathcal{LUMI}$ model first formally defined in \cite{DasFPSY12, 0001FPSY16}, where the robots have both persistent memory and communication, although it must be noted that the remembering or communicating can be done only in limited capacity, i.e., the robots can remember or communicate finite number of bits. 

Two new models have been introduced in \cite{FlocchiniSVY16} by eliminating one of the two capabilities of $\mathcal{LUMI}$ model, in each case. These two models are $\mathcal{FSTA}$ and $\mathcal{FCOM}$. In $\mathcal{FSTA}$ model the communication capability is absent while in $\mathcal{FCOM}$ model the robots do not have persistent memory. In \cite{FlocchiniSW19} these models have been considered to investigate the question $is$  $it$ $better$ $to$ $remember$ $or$ $to$ $communicate?$.



In this work we consider another factor, i.e., $visibility$, which helps to investigate the matter from a different angle and is of interest from both theoretical and practical point of view. If $\mathcal{M}$ denotes a model and $\sigma$ is a scheduler then traditionally $\mathcal{M}^{\sigma}$ denotes a model $\mathcal{M}$ under scheduler $\sigma$. Here $\mathcal{M} \in \{ \mathcal{OBLOT}, \mathcal{FSTA}, \mathcal{FCOM}, \mathcal{LUMI} \}$ and $\sigma \in \{ FSYNCH, SSYNCH \}$. 
We here define $\mathcal{M_{V}^{\sigma}}$ denotes a model $\mathcal{M}$ under scheduler $\sigma$ and visibility state $\mathcal{V}$. Here $\mathcal{V} \in \{\mathcal{F.V.}, \mathcal{L.V.} \}$. Here $\mathcal{F.V.}$ denotes the full visibility model and $\mathcal{L.V.}$ denotes the limited visibility model. In limited visibility model, each robot can view upto a constant radius of their position  and the initial visibility graph is connected.

All the works done till now had given the full visibility power to the robots. So we try to answer a number of questions, 
\begin{itemize}
    \item Does restriction on visibility have a significant impact on the computational power of a model? In other words, is any problem which is solvable in full visibility model, solvable in limited visibility model also?
    \item If the answer to the second question mentioned above is no, then can the impairment be $always$ adjusted by minor adjustments, i.e., by keeping the model intact but by making the scheduler more powerful?
    \item If the answer to the above question is also no, then what are the cases where the lack of full visibility can be compensated? Can all the cases be classified? 
\end{itemize}

We have shown the answer to the first question is no, e.g., we have proved that $\mathcal{FSTA_{L.V.}^{F}} < \mathcal{FSTA_{F.V.}^{F}}$. This result turns out to be true for all the four models $\{ \mathcal{OBLOT},\ \mathcal{FSTA},\  \mathcal{FCOM}, \mathcal{LUMI} \}$. After we got answer to our first question, we tried to answer our second question.  A definite answer to the second question shall give us some insight about whether any one of the two parameters of $visibility$ and $synchronicity$ have any precedence over the other. But as we shall see this does not happen. For e.g.,  we got the result that $\mathcal{FCOM_{L.V.}^{F}} \perp \mathcal{FCOM_{F.V.}^{S}}$, which effectively shows that both the capabilities are important, and deficiency in one of the parameter cannot be compensated by making the other parameter stronger. In the process, we have defined a new problem $EqOsc$ which we have shown to be unsolvable unless the scheduler is fully synchronous, but solvable if the scheduler is fully synchronous, even under limited visibility in $\mathcal{FSTA}$ and $\mathcal{FCOM}$ models. The third question as of now, yet remains unanswered, and subject to further investigations.

The results presented in this paper gives a whole new insight to the parameter of visibility and its relevance relative to the parameter of synchronicity, one that requires exhaustive analysis, even beyond the amount of investigation that had been done in this paper.

\subsection{Related Work}

Investigations regarding the computational power of robots under synchronous schedulers was done by the authors Flocchini et. al. in \cite{FlocchiniSW19}. Main focus of the investigation in this work was which of the two capabilities was more important: $persistent$ $memory$ or $communication$. In the course of their investigation  they proved that under fully synchronous scheduler communication is more powerful than persistent memory. In addition to that, they gave a complete exhaustive map of the relationships among models and schedulers.

In \cite{BuchinFKPSW21}, the previous work of characterizing the relations between the robot models and three type of schedulers was continued. The authors provided a more refined map of the computational landscape for robots operating under fully synchronous and semi-synchronous schedulers, by removing the assumptions on robots' movements of $rigidity$ and common $chirality$. Further authors establish some preliminary results with respect to asynchronous scheduler.

The previous two works considered that the robots was assumed to have unlimited amounts of energy. In \cite{BuchinFKPSW22}, the authors removed this assumption and started the study of computational capabilities of robots whose energy is limited, albeit renewable. In these systems, the activated entities uses all its energy to  execute an $LCM$ cycle and then the energy is restored after a period of inactivity. They studied the impact that memory persistence and communication capabilities have on the computational power of such robots by analyzing the computational relationship between the four models  $\{ \mathcal{OBLOT}, \mathcal{FSTA}, \mathcal{FCOM}, \mathcal{LUMI} \} $ under the energy constraint. They provided a full characterization of this relationship. Among the many results they proved that for energy-constrained robots, $\mathcal{FCOM}$ is more powerful than $\mathcal{FSTA}$.

In all the three above mentioned works, the robots had full visibility. A robot uses its visibility power in the $Look$ phase of the $LCM$ cycle to acquire information about its surroundings, i.e., position and lights (if any) of other robots. The biggest drawback of full visibility assumption is that it is not practically feasible. So, recently some of the authors \cite{FlocchiniPSW05, AndoOSY99, PoudelS21, GoswamiSGS22} have considered limited visibility. For example, in \cite{FlocchiniPSW05} they considered that the robots can see up to a fixed radius $V$ from it. So it was important to study the power of different robot models under limited visibility.

\subsection{Our Contributions}
Let $\mathcal{M^X_V}$ denote a model $\mathcal{M}$ under scheduler $\mathcal{X}$ with visibility $\mathcal{V}$. Here  $\mathcal{M} \in \{ \mathcal{OBLOT},\ \mathcal{FSTA},\  \mathcal{FCOM}, \mathcal{LUMI} \}$,  $\mathcal{X} \in \{\mathcal{F}, \mathcal{S}\}$, here $\mathcal{F}$ and $\mathcal{S}$ denotes $FSYNCH$ and $SSYNCH$ schedulers respectively, and $\mathcal{V} \in \{\mathcal{F.V.}, \mathcal{L.V.} \}$. $\mathcal{A^B_C} \geq \mathcal{D^E_F}$ denotes that the model $\mathcal{A}$, under scheduler $\mathcal{B}$ and visibility $\mathcal{C}$ is computationally not weaker than  the model $\mathcal{D}$ under scheduler $\mathcal{E}$ and visibility $\mathcal{F}$, $\mathcal{A^B_C} > \mathcal{D^E_F}$ denotes that the model $\mathcal{A}$ under scheduler $\mathcal{B}$ and visibility $\mathcal{C}$ is computationally more powerful than  the model $\mathcal{D}$ under scheduler $\mathcal{E}$ and visibility $\mathcal{F}$, $\mathcal{A^B_C} \equiv \mathcal{D^E_F}$ denotes that the model $\mathcal{A}$ under scheduler $\mathcal{B}$ and visibility $\mathcal{C}$ is computationally equivalent to  the model $\mathcal{D}$ under scheduler $\mathcal{E}$ and visibility $\mathcal{F}$ and $\mathcal{A^B_C} \perp \mathcal{D^E_F}$ denotes that the model $\mathcal{A}$ under scheduler $\mathcal{B}$ and visibility $\mathcal{C}$ is computationally incomparable with  the model $\mathcal{D}$ under scheduler $\mathcal{E}$ and visibility $\mathcal{F}$.

We first examine the computational relationship under a constant model and scheduler between limited and full visibility conditions. We find that in both fully synchronous and semi-synchronous cases, a model under full visibility is strictly more powerful than a model under limited visibility. We get the following results:

\begin{enumerate}
    \item $\mathcal{OBLOT^F_{F.V.}> OBLOT^F_{L.V.}}$
    \item $\mathcal{FSTA^F_{F.V.}> FSTA^F_{L.V.}}$
    \item $\mathcal{FCOM^F_{F.V.}> FCOM^F_{L.V.}}$
    \item $\mathcal{LUMI^F_{F.V.}> LUMI^F_{L.V.}}$
    \item $\mathcal{OBLOT^S_{F.V.}> OBLOT^S_{L.V.}}$
    \item $\mathcal{FSTA^S_{F.V.}> FSTA^S_{L.V.}}$
    \item $\mathcal{FCOM^S_{F.V.}> FCOM^S_{L.V.}}$
    \item $\mathcal{LUMI^S_{F.V.}> LUMI^S_{L.V.}}$
\end{enumerate}

We then examine the computational relationship between the four models $\{ \mathcal{OBLOT},\mathcal{FSTA},\mathcal{FCOM}, \mathcal{LUMI} \}$ under limited visibility between fully synchronous and semi-synchronous schedulers. We find that under limited visibility conditions each of the three models are more powerful under fully synchronous scheduler.

\begin{enumerate}
 \setcounter{enumi}{8}
    \item $\mathcal{OBLOT^F_{L.V.}> OBLOT^S_{L.V.}}$
    \item $\mathcal{FSTA^F_{L.V.}> FSTA^S_{L.V.}}$
    \item $\mathcal{FCOM^F_{L.V.}> FCOM^S_{L.V.}}$
    \item $\mathcal{LUMI^F_{L.V.}> LUMI^S_{L.V.}}$
    
\end{enumerate}

Together with the three results mentioned immediately above, we also get an idea which of the capabilities, $visibility$ or $synchronicity$ is more powerful. From the previous results we generally conclude that $\mathcal{M^S_{L.V.}} < \mathcal{M^S_{F.V.}}$ and $\mathcal{M^S_{L.V.}} < \mathcal{M^F_{L.V.}}$. If we can prove that $\mathcal{M^F_{L.V.}} \geq \mathcal{M^S_{F.V.}}$, then it shall imply that by making the scheduler stronger, the limitation in visibility can be compensated. Similarly if we can prove that $\mathcal{M^S_{F.V.}} \geq \mathcal{M^F_{L.V.}}$, then it shall imply that by giving complete visibility, the weakness in terms of scheduler can be compensated. But after our detailed investigation it has been revealed that neither of the above cases happen and in general $\mathcal{M^F_{L.V.}} \perp \mathcal{M^S_{F.V.}}$. Specifically the results are:

\begin{enumerate}
 \setcounter{enumi}{12}
    \item $\mathcal{OBLOT^F_{L.V.} \perp OBLOT^S_{F.V.}}$
    \item $\mathcal{FSTA^F_{L.V.} \perp FSTA^S_{F.V.}}$
    \item $\mathcal{FCOM^F_{L.V.} \perp FCOM^S_{F.V.}}$
    \item $\mathcal{LUMI^F_{L.V.} \perp LUMI^S_{F.V.}}$
    
\end{enumerate}

\subsection{Paper Organization}
 In Section \ref{2} we define the computational model, visibility models and the other technical preliminaries. In Sections \ref{main sec}, we discuss the computational relationships between the four models, $\mathcal{OBLOT}$, $\mathcal{FCOM}$, $\mathcal{FSTA}$ and $\mathcal{LUMI}$ models respectively, subject to variations in synchronicity and visibility. In Section \ref{6}, we try to compare the strength of the capabilities of $synchronicity$ and $visibility$. In Section \ref{7} we present the conclusion.

\section{Model and Technical Preliminaries}\label{2}
\subsection{The Basics}
In this paper we consider a team $R = \{ r_0, \ldots, r_n \}$ of computational entities moving and operating in the Euclidean Plane $\mathbb{R}^2$, which are viewed as points and called $robots$. The robots can move freely and continuously in the plane. Each robot has its own local coordinate system and it always perceives itself at its origin; there might not be consistency between these coordinate systems. The robots are $identical$: they are indistinguishable by their appearance and they execute the same protocol, and they are $autonomous$, i.e., without any central control.

The robots operate in $Look-Compute-Move$ $(LCM)$ cycles. When activated a robot executes a cycle by performing the following three operations:

\begin{enumerate}
    \item $Look$: The robots activate its sensors to obtain a snapshot of the positions occupied by the robots according to its co-ordinate system.
    \item $Compute$: The robot executes its algorithm using the snapshot as input. The result of the computation is a destination point.
    \item $Move$: The robot moves to the computed destination. If the destination is the current location, the robot stays still.
\end{enumerate}

All robots are initially idle. The amount of time to complete a cycle is assumed to be finite, and the $Look$ is assumed to be instantaneous.
The robots may not have any agreement in terms of their local coordinate system. By $chirality$, we mean the robots agree on the same circular orientation of the plane, or in other words they agree on "clockwise" direction. In our paper, we do not assume the robots to have a common sense of chirality.

\subsection{The Computational Models}

There are four basic robot models which are considered in literature, they are namely, $\{ \mathcal{OBLOT}, \mathcal{FSTA}, \mathcal{FCOM}, \mathcal{LUMI} \}$.

In the most common, $\mathcal{OBLOT}$, the robots are $silent$: they have no explicit means of communication; furthermore they are $oblivious$: at the start of the cycle, a robot has no memory of observations and computations performed in previous cycles.

In the most common model, $\mathcal{LUMI}$, each robot $r$ is equipped with a persistent visible state variable $Light[r]$, called $light$, whose values are taken from a finite set $C$ of states called $colors$ (including the color that represents the initial state when the light is off). The colors of the lights can be set in each cycle by $r$ at the end of its $Compute$ operation. A light is $persistent$ from one computational cycle to the next: the color is not automatically reset at the end of a cycle; the robots are otherwise oblivious, forgetting all other information from previous cycles. In $\mathcal{LUMI}$, the $Look$ operation produces a colored snapshot; i.e., it returns the set of pairs ($position$, $color$) of the robots. Note that if $\lvert C \rvert =1$, then the light is not used; this case corresponds to the $\mathcal{OBLOT}$ model.

It is sometimes convenient to describe a robot $r$ as having $k \geq 1$ lights, denoted $r.light_1, \ldots, r.light_k$, where the values of $r.light_i$ are from a finite set of colors $C_i$, and to consider $Light[r]$ as a $k$-tuple of variables; clearly, this corresponds to $r$ having a single light that uses $\prod_{i=1}^{k} \lvert C_i \rvert$ colors.

The lights provide simultaneously persistent memory and direct means of communication, although both limited to a constant number of bits per cycle. Two sub-models of $\mathcal{LUMI}$ have been defined and investigated , each offering only one of these two capabilities.

In the first model, $\mathcal{FSTA}$, a robot can only see the color of its own light; that is, the light is an $internal$ one and its color merely encodes an internal state. Hence the robots are $silent$, as in $\mathcal{OBLOT}$, but are $finite$-$state$. Observe that a snapshot in $\mathcal{FSTA}$ is same as in $\mathcal{OBLOT}$.

In the second model, $\mathcal{FCOM}$, the lights $external$: a robot can communicate to the other robots through its colored light but forgets the color of its own light by the next cycle; that is, robots are $finite$-$communication$ but are $oblivious$. A snapshot in $\mathcal{FCOM}$ is like in $\mathcal{LUMI}$ except that, for the position $x$ where the robot $r$ performing the $Look$ is located, $Light[r]$ is omitted from the set of colors present at $x$.

In all the above models, a $configuration$ $C(t)$ at time $t$ is the multi-set of the $n$ pairs of the ($x $), where $c_{i}$ is the color of robot $r_i$ at time $t$.

\subsection{The Schedulers}

With respect to the activation schedule of the robots, and the duration of their $Look$-$Compute$-$Move$ cycles, the fundamental distinction is between the $asynchronous$ and $synchronous$ settings.

In the $asynchronous$ setting (ASYNCH), there is no common notion of time, each robot is activated independently of others, the duration of each phase is finite but unpredictable and might be different cycles.

In the $synchronous$ setting (SSYNCH), also called semi-synchronous, time is divided into discrete intervals, called $rounds$; in each round some robots are activated simultaneously, and perform their $LCM$ cycle in perfect synchronization.

A popular synchronous setting which plays an important role is the $fully$-$synchronous$ setting (FSYNCH) where every robot is activated in every round; the is, the activation scheduler has no adversarial power.

In all two settings, the selection of which robots are activated at a round is made by an adversarial $scheduler$, whose only limit is that every robot must be activated infinitely often (i.e., it is fair scheduler). In the following, for all synchronous schedulers, we use round and time interchangeably.

\subsection{The Visibility Models}
In our work we do comparative analysis of computational models with robots having $full$ and $limited$ visibility. In $full$ $visibility$ model, denoted as $\mathcal{F.V.}$, the robots have sensorial devices that allows it to observe the positions of the other robots in its local co-ordinate system.

In $limited$ $visibility$ model, denoted as $\mathcal{L.V.}$, a robot can only observe upto a fixed distance $V_r$ from it. Suppose, $r_p(t)$ denote the position of a robot $r$ at the beginning of round $t$. Then we define the circle with center at $r_p(t)$ and radius $V_r$ to be the $Circle$  $of$ $Visibility$ of $r$ at round $t$. Here the radius $V_r$ is same for all the robots. The result of $Look$ operation in round $t$ will be the position of the robots and lights(if any) of the robots inside the circle of visibility.

We now define the $Visibility$ $Graph$, $G =(V,E)$ of a configuration. Let $C$ be a given configuration. Then all the robot positions become the vertices of $G$ and we say an edge exists between any two vertices if and only if the robots present there can see each other. The necessary condition for the problems we have defined in the paper is that the $Visibility$ $Graph$ of the initial configuration must be connected.

\subsection{Some Important Definitions}
We define our computational relationships similar to that of \cite{FlocchiniSW19}. Let $\mathcal{M}= \{ \mathcal{OBLOT}, \mathcal{FSTA}, \mathcal{FCOM}, \mathcal{LUMI}  \}$ be the robot models under investigation, the set of activation schedulers be $\mathcal{S} = \{ FSYNCH, ASYNCH\}$ and the set of visibility models be $ \mathcal{V} = \{\mathcal{F.V.}, \mathcal{L.V.} \}$.

We denote by $\mathcal{R}$ the set of all teams of robots satisfying the core assumptions (i.e., they are identical, autonomous, and operate in $LCM$ cycles), and $R \in \mathcal{R}$ a team of robots having identical capabilities (e.g., common coordinate system, persistent storage, internal identity, rigid movements etc.). By $\mathcal{R}_n \subset R$ we denote the set of all teams of size $n$.

By \textit{problem} we mean a task where a fixed number of robots have to form some configuration or configurations (which may be a function of time) subject to some conditions, within a finite amount of time.

Given a model $M \in \mathcal{M}$, a scheduler $S \in \mathcal{S}$, visibility $V \in \mathcal{V}$, and a team of robots $R \in \mathcal{R}$, let $Task(M,S,V;R)$ denote the set of problems solvable by $R$ in $M$, with visibility $V$ and under adversarial scheduler $S$.

Let $M$, $N$ $\in$ $\mathcal{M}$, $S_1$, $S_2$ $\in$ $\mathcal{S}$ and $V_1$, $V_2$ $\in$ $\mathcal{V}$. We define the relationships between model $M$ with visibility $V_1$ under scheduler $S_1$ and model $N$ with visibility $V_2$ under scheduler $S_2$:

\begin{itemize}
    \item $computationally$ $not$ $less$ $powerful$ ($M^{S_1}_{V_1}$ $\geq$ $N^{S_2}_{V_2}$), if $\forall$ $R \in \mathcal{R}$ we have $Task(M, S_1;R)$ $\supseteq$ $Task(N, S_2;R)$;

    \item $computationally$ $more$ $powerful$ ($M^{S_1}_{V_1}$ $>$ $N^{S_2}_{V_2}$), if $M^{S_1}_{V_1}$ $\geq$ $N^{S_2}_{V_2}$ and $\exists R \in \mathcal{R}$ such that $Task(M, S_1, V_1;R)$ $\setminus$ $Task(N, S_2, V_2;R)$ $\neq$ $\emptyset$;
    
    \item $computationally$ $equivalent$ ($M^{S_1}_{V_1}$ $\equiv$ $N^{S_2}_{V_2}$), if $M^{S_1}_{V_1}$ $\geq$ $N^{S_2}_{V_2}$ and $M^{S_1}_{V_1}$ $\leq$ $N^{S_2}_{V_2}$;

    \item $computationally$ $orthogonal$ $or$ $incomparable$, ($M^{S_1}_{V_1}$ $\perp$ $N^{S_2}_{V_2}$), if $\exists R_1, R_2 \in \mathcal{R}$ such that $Task(M, S_1, V_1;R_1)$ $\setminus$ $Task(N, S_2, V_2;R_1)$ $\neq$ $\emptyset$ and $Task(N, S_2, V_2;R_2)$ $\setminus$ $Task(M, S_1, V_1;R_2)$ $\neq$ $\emptyset$.
\end{itemize}

For simplicity of notation, for a model $M$ $\in$ $\mathcal{M}$, let $M^{F}$ and $M^{S}$ denote $M^{Fsynch}$ and $M^{Ssynch}$, respectively; and let $M^{F}_{V}(R)$ and $M^{S}_{V}(R)$ denote the sets $Task(M, FSYNCH, V;R)$ and $Task(M, SSYNCH, V;R)$, respectively.

\subsection{Some Fundamental Comparisons}

Trivially, 
\begin{enumerate}
    \item $\mathcal{LUMI} \geq \mathcal{FSTA} \geq \mathcal{OBLOT}$  and $\mathcal{LUMI} \geq \mathcal{FCOM} \geq \mathcal{OBLOT}$, when the $Visibility$ and $Synchronicity$ is fixed.
    \item $\mathcal{FSYNCH} \geq \mathcal{SSYNCH} \geq \mathcal{ASYNCH}$   when the model and $Visibility$ is fixed.
    \item $\mathcal{F.V.} \geq \mathcal{L.V.}$   when the model and $Synchronicity$ is fixed.

\end{enumerate}

\section{ANGLE EQUALIZATION PROBLEM}\label{main sec}
\begin{definition}
    \textbf{Problem Angle Equalization (AE)}: Suppose four robots $r_1$, $r_2$, $r_3$ and $r_4$ are placed in positions $A$, $B$, $C$ and $D$ respectively, as given in Configuration (I). The line $AB$ makes an acute angle $\theta_1$ with $BC$ and the line $CD$ makes an acute $\theta_2$ with $BC$. Here $\theta_1 < \theta_2 < 90^\circ $.

    The robots must form the Configuration (II) without any collision. The robots $r_2$ and $r_3$ must remain fixed in their positions.
\end{definition}

\subsection{Algorithm for $AE$ problem in $\mathcal{OBLOT_{F.V}^S}$}

\begin{figure}[h!]
\begin{minipage}[H]{0.45\linewidth}
\centering 
\includegraphics[width=3cm]{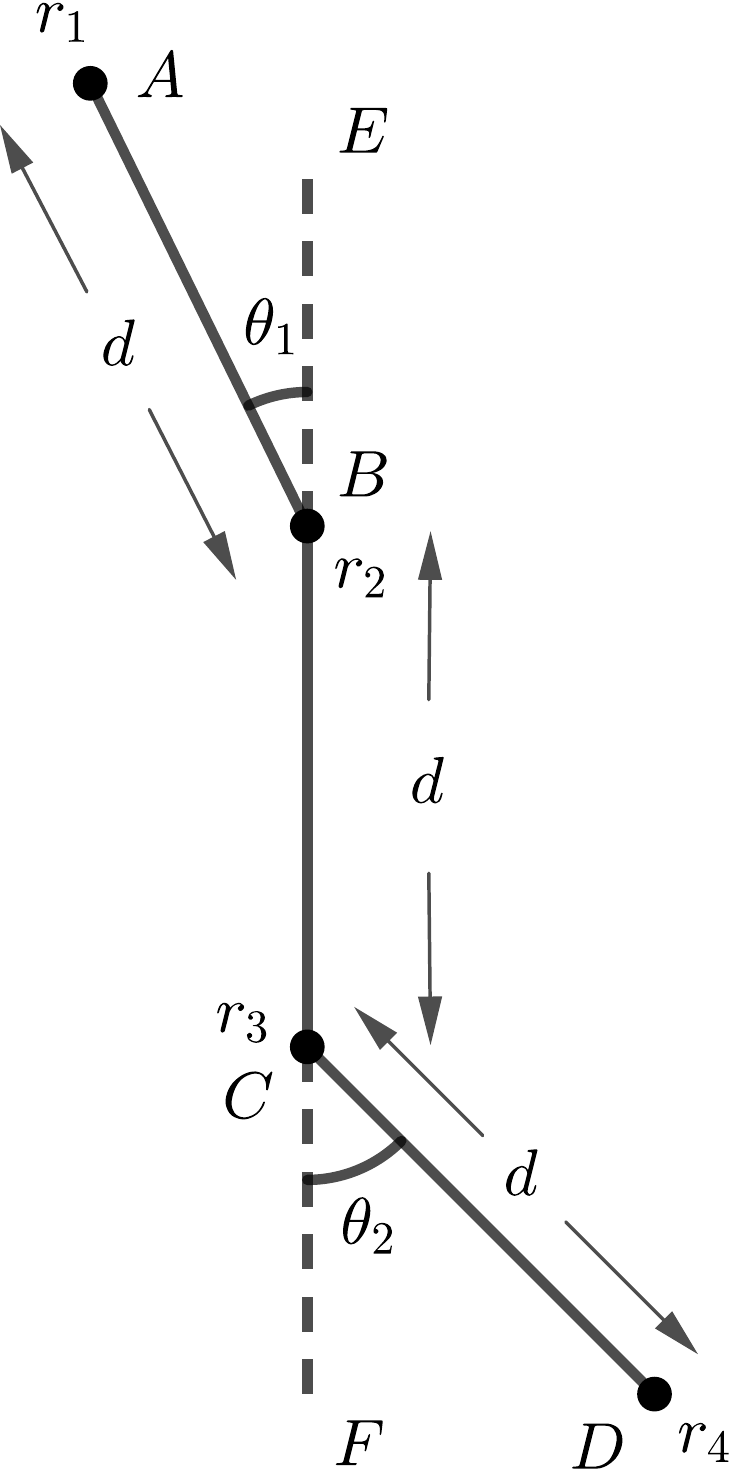}
\caption{Configuration (I) of Problem $AE$}
\label{fig:Iso1}
\end{minipage}
\hfill
\begin{minipage}[H]{0.45\linewidth}
 \centering   
\includegraphics[width=3.5cm]{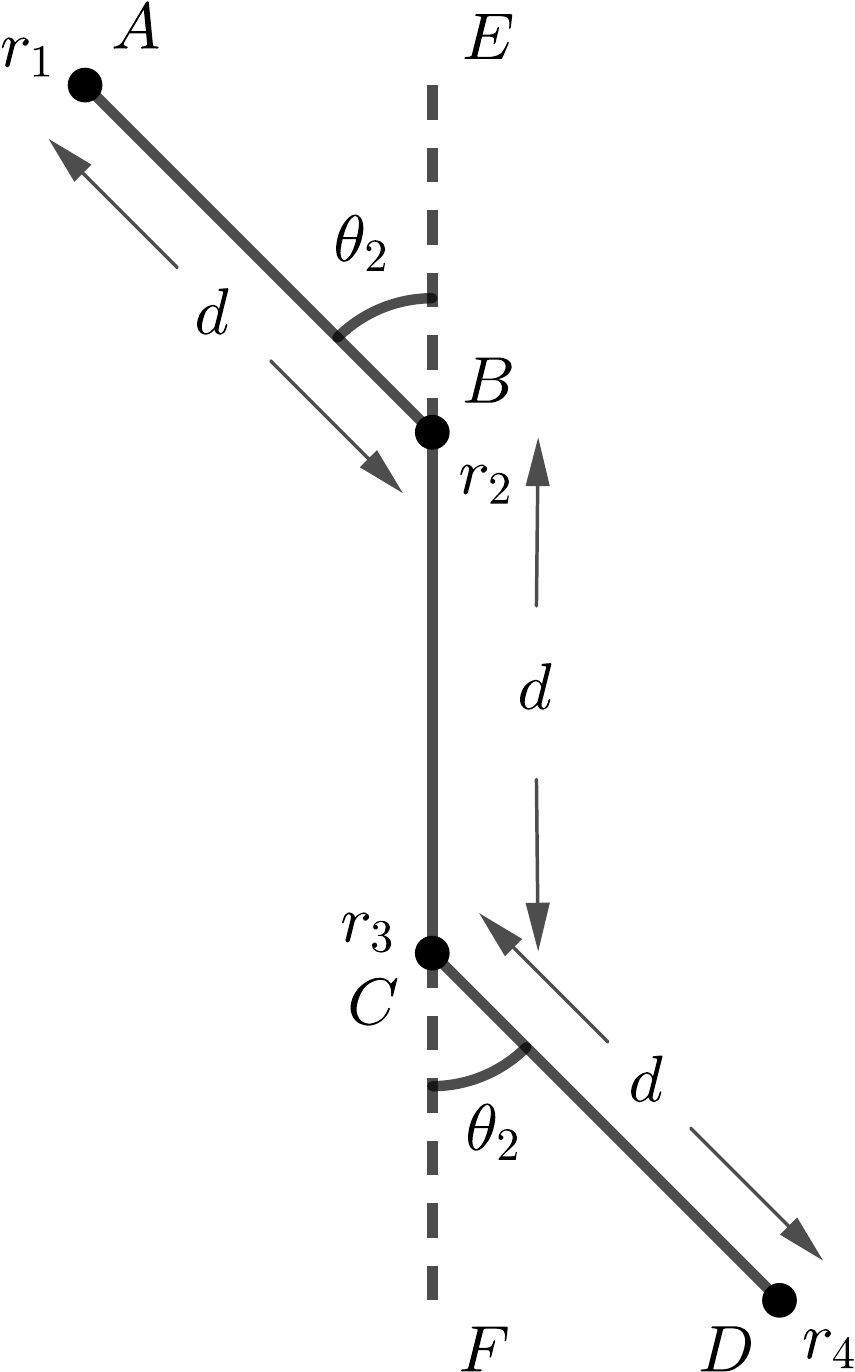}
\caption{Configuration (II) of Problem $AE$}
\label{fig:Iso1}
\end{minipage}
\end{figure}

\begin{figure}[h!]
\centering
\includegraphics[width=1.8cm]{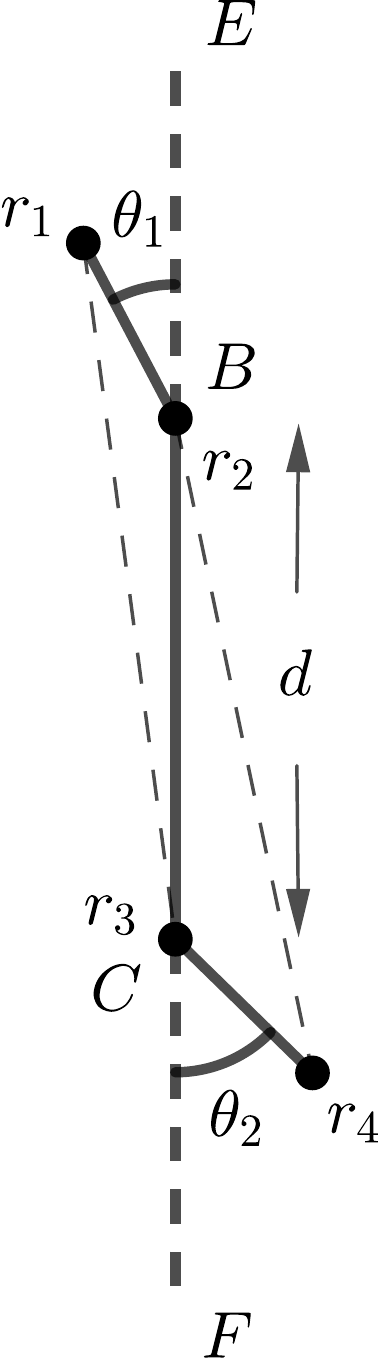}
\caption{Visibility Range Gap}
\label{fig:Iso1}
\end{figure}

Under full visibility conditions, each robot can see all the robot locations in the plane. Now each robot can uniquely identify its position in the plane. Therefore whenever the robot $r_1$ is activated, it will move to the position $A'$ such that the $\angle A'BE = \theta_2$. The rest of the robots will not move. After the robot $r_1$ moves all the robots can perceive that Configuration (II) is obtained and henceforward there will be no further movement of the robots. Hence the problem is solved.

\begin{lemma}\label{new lemma 1}
    $\forall$  $R\in \mathcal{R}_4$, $AE\in\mathcal{OBLOT_{F.V}^S}$.
\end{lemma}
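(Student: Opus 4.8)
\medskip
\noindent\textbf{Proof plan.}
I would prove this by exhibiting a single $\mathcal{OBLOT}$ algorithm that solves $AE$ for every team $R\in\mathcal{R}_4$ under the semi-synchronous scheduler, using only full visibility. The first step is a \emph{role-recognition} observation: in its $Look$ phase a robot obtains an uncolored snapshot of all four positions, and since the ambient configuration is of type (I) or (II) it can reconstruct the path $A$--$B$--$C$--$D$ together with the two acute angles it makes with $BC$ --- $\theta_1$ at $B$ and $\theta_2$ at $C$ (here I would use that Configuration (I) is in general position, so this reconstruction is unambiguous). When $\theta_1\ne\theta_2$ the configuration has no nontrivial symmetry, so each robot can decide whether it is the endpoint incident to the \emph{strictly smaller} of the two angles, and this test is independent of the local coordinate systems and of chirality.

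Next I would specify the algorithm: \emph{if the two interior angles are distinct and I am the endpoint incident to the smaller one, move to the point $A'$ on my own side of line $BC$ with $\angle A'BE=\theta_2$ (say at the same distance from $B$ as now); otherwise stay still}. In Configuration (I) exactly $r_1$ passes the test, while $r_2,r_3,r_4$ stay put; in Configuration (II) both interior angles equal $\theta_2$, so no robot passes the test and the configuration is stationary. Correctness then follows from fairness of the scheduler: $r_1$ is activated in some round and, invoking the rigidity of movements assumed in the ambient model of this section, reaches $A'$ within that round, producing Configuration (II), which is thereafter fixed; hence the task completes in finite time. For collision-freeness I would note that $r_1$ is the only robot that ever moves and that, by the geometry of Configuration (I), neither $A'$ nor any point of the trajectory $\overline{AA'}$ equals $B$, $C$ or $D$. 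Since $R\in\mathcal{R}_4$ was arbitrary, this gives $AE\in\mathcal{OBLOT_{F.V}^S}$.

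The part that will need the most care is termination in an oblivious, silent model: the robots keep no counter and exchange no messages, so the whole argument rests on the stopping configuration being recognizable from one snapshot --- which it is, because ``the two interior angles coincide'' characterizes Configuration (II) and fails in Configuration (I) --- and on no robot ever seeing $r_1$ mid-motion. The latter is exactly what the semi-synchronous assumption provides, since an activated robot performs $Look$, $Compute$ and $Move$ atomically within a round; this is precisely where the argument would break down under an asynchronous scheduler, which is why the statement is stated for $\mathcal{S}$ (and a fortiori holds for $\mathcal{F}$). I would also remark that any residual symmetry of Configuration (II) is harmless, as the prescribed action there --- stand still --- is the same for all robots.
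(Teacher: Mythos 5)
Your proposal is correct and follows essentially the same route as the paper: use full visibility to let each robot identify its role in the asymmetric path $A$--$B$--$C$--$D$, have the unique endpoint at the smaller angle move once to $A'$ with $\angle A'BE=\theta_2$, and rely on Configuration (II) being recognizable from a single snapshot so that all robots subsequently stay still. Your version is in fact more careful than the paper's sketch about symmetry breaking, collision-freeness and termination, but the underlying algorithm and argument are the same.
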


\begin{corollary}\label{new cor 1}
     $\forall$  $R\in \mathcal{R}_4$, $AE\in\mathcal{OBLOT_{F.V}^F}$.
\end{corollary}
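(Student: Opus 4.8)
The plan is to derive Corollary \ref{new cor 1} directly from Lemma \ref{new lemma 1} together with Fundamental Comparison (2), which asserts $\mathcal{FSYNCH} \geq \mathcal{SSYNCH}$ whenever the model and visibility are held fixed. Concretely, fixing the model to be $\mathcal{OBLOT}$ and the visibility to be $\mathcal{F.V.}$, this comparison yields $\mathcal{OBLOT^F_{F.V.}} \geq \mathcal{OBLOT^S_{F.V.}}$, which by definition means that for every team $R \in \mathcal{R}$ we have $Task(\mathcal{OBLOT}, FSYNCH, \mathcal{F.V.};R) \supseteq Task(\mathcal{OBLOT}, SSYNCH, \mathcal{F.V.};R)$. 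Restricting to $R \in \mathcal{R}_4$ and invoking Lemma \ref{new lemma 1}, which states $AE \in \mathcal{OBLOT^S_{F.V.}}$ for all such $R$, gives $AE \in \mathcal{OBLOT^F_{F.V.}}$ immediately.

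Alternatively, and perhaps more transparently, I would give a one-line direct argument: any algorithm that solves a problem under the semi-synchronous adversary also solves it under the fully synchronous adversary, since the set of possible $FSYNCH$ executions is a subset of the set of possible $SSYNCH$ executions (the $FSYNCH$ scheduler is simply the $SSYNCH$ scheduler that happens to activate every robot in every round, and a fair $SSYNCH$ scheduler is allowed to behave this way). Hence the very algorithm described in the preceding paragraph for $\mathcal{OBLOT^S_{F.V.}}$ — namely, $r_1$ moves to the point $A'$ with $\angle A'BE = \theta_2$ while $r_2, r_3, r_4$ stay put — also works verbatim when the scheduler is fully synchronous: in the first round all four robots are activated, $r_1$ computes and moves to $A'$, the others recompute their current positions as destinations and do not move, Configuration (II) is reached, and it is stable thereafter.

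There is essentially no obstacle here; the only point requiring the tiniest care is to confirm that the $\mathcal{OBLOT}$ algorithm given is genuinely correct under $FSYNCH$ — in particular that it is collision-free and that the target configuration is a fixed point of the protocol. Collision-freedom is immediate since only $r_1$ moves and it moves to a point $A'$ that is distinct from $B$, $C$, $D$ (as $\theta_1 < \theta_2 < 90^\circ$ forces $A' \neq A$ and the geometry keeps $A'$ off the segment $BC$ and away from $C$, $D$); stability is immediate since once Configuration (II) holds every robot, upon activation, sees the target configuration and computes its own current position as destination. I would therefore state the corollary's proof in two sentences: apply the chain $\mathcal{OBLOT^F_{F.V.}} \geq \mathcal{OBLOT^S_{F.V.}}$ from the fundamental comparisons to Lemma \ref{new lemma 1}.
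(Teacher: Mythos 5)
Your proposal is correct and matches the paper's own (one-line) proof, which likewise derives the corollary from Lemma \ref{new lemma 1} via the standard observation that $\mathcal{FSYNCH} \geq \mathcal{SSYNCH}$ when model and visibility are fixed. The extra verification that the algorithm remains collision-free and stable under $FSYNCH$ is a harmless bonus but not needed beyond the containment argument.
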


\begin{proof}
    Follows from Lemma \ref{new lemma 1}.
\end{proof}

\begin{corollary}\label{set 1 cor 2}
     $\forall$  $R\in \mathcal{R}_4$, $AE\in\mathcal{FSTA_{F.V}^S}$.
\end{corollary}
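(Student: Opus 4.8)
The plan is to obtain this statement as an immediate consequence of Lemma~\ref{new lemma 1} together with the fundamental model inclusion $\mathcal{LUMI} \geq \mathcal{FSTA} \geq \mathcal{OBLOT}$ recorded in Section~\ref{2} (item~1 of the Fundamental Comparisons), which holds whenever the visibility model and the scheduler are held fixed. Concretely, I would argue that any protocol solving $AE$ in $\mathcal{OBLOT_{F.V}^S}$ can be run verbatim by robots in $\mathcal{FSTA_{F.V}^S}$: an $\mathcal{FSTA}$ robot may simply never change its internal light (equivalently, use a colour set of size one), so its externally observable behaviour reduces exactly to the oblivious, silent behaviour of an $\mathcal{OBLOT}$ robot. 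Moreover, as noted in the model definitions, a snapshot in $\mathcal{FSTA}$ is the same as a snapshot in $\mathcal{OBLOT}$, so the input to the \emph{Compute} phase is unchanged; since the adversarial semi-synchronous scheduler and the full-visibility sensor are also unchanged, the same execution, and hence the same correctness guarantee, carries over for every team $R \in \mathcal{R}_4$. In particular the quantifier ``$\forall R \in \mathcal{R}_4$'' transfers directly from Lemma~\ref{new lemma 1}.

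There is no genuine obstacle here — the result is a routine corollary — but for a self-contained presentation I would additionally observe that the explicit algorithm given just before Lemma~\ref{new lemma 1} (robot $r_1$ moves, whenever activated, to the point $A'$ with $\angle A'BE = \theta_2$, while $r_2, r_3, r_4$ remain still, after which all robots detect Configuration~(II) and stop) already uses no lights whatsoever, so it is literally an $\mathcal{FSTA}$ algorithm as written. Hence one may either invoke the abstract inclusion or simply re-read the same protocol in the $\mathcal{FSTA}$ model; both routes establish $AE \in \mathcal{FSTA_{F.V}^S}$.
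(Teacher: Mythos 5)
Your proposal is correct and matches the paper's own argument, which simply states that the corollary follows from Lemma~\ref{new lemma 1}; you merely spell out the underlying inclusion $\mathcal{FSTA}\geq\mathcal{OBLOT}$ (with visibility and scheduler fixed) that the paper leaves implicit. No further comment is needed.
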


\begin{proof}
    Follows from Lemma \ref{new lemma 1}.
\end{proof}

\begin{corollary}\label{set 1 cor 4}
     $\forall$  $R\in \mathcal{R}_4$, $AE\in\mathcal{FSTA_{F.V}^F}$.
\end{corollary}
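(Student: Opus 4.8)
The plan is to obtain the statement as an immediate corollary of Lemma~\ref{new lemma 1} using the monotonicity facts recalled in Section~\ref{2}. Since Lemma~\ref{new lemma 1} already gives $AE\in\mathcal{OBLOT_{F.V}^S}$ for every $R\in\mathcal{R}_4$, the work reduces to transporting this membership from $\mathcal{OBLOT}$ to $\mathcal{FSTA}$ and from $SSYNCH$ to $FSYNCH$ while keeping the team and the full-visibility assumption fixed.

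First I would invoke the model hierarchy $\mathcal{FSTA}\geq\mathcal{OBLOT}$ (at fixed visibility and synchronicity): an $\mathcal{OBLOT}$ robot is precisely the special case of an $\mathcal{FSTA}$ robot that keeps a single internal color, so every $\mathcal{OBLOT}$ protocol is verbatim an $\mathcal{FSTA}$ protocol, whence $Task(\mathcal{OBLOT},SSYNCH,\mathcal{F.V.};R)\subseteq Task(\mathcal{FSTA},SSYNCH,\mathcal{F.V.};R)$; this is exactly Corollary~\ref{set 1 cor 2}. Next I would invoke the scheduler hierarchy $\mathcal{FSYNCH}\geq\mathcal{SSYNCH}$ (at fixed model and visibility): a fully synchronous schedule is one particular semi-synchronous schedule, so a protocol correct against every semi-synchronous adversary is in particular correct against the fully synchronous one, giving $Task(\mathcal{FSTA},SSYNCH,\mathcal{F.V.};R)\subseteq Task(\mathcal{FSTA},FSYNCH,\mathcal{F.V.};R)$. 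Composing the two inclusions and using membership of $AE$ in the leftmost set yields $AE\in\mathcal{FSTA_{F.V}^F}$ for every $R\in\mathcal{R}_4$. Equivalently, one could start from Corollary~\ref{new cor 1} and apply only the model step, or from Corollary~\ref{set 1 cor 2} and apply only the scheduler step.

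There is essentially no obstacle here: the substantive content, namely the explicit algorithm solving $AE$ under full visibility, has already been presented before Lemma~\ref{new lemma 1}, and what remains is pure bookkeeping with the model and scheduler orderings. The only point that requires a moment of care is citing the scheduler comparison in the \emph{correct} direction --- a weaker (fully synchronous) adversary enlarges, rather than shrinks, the set of solvable tasks --- which is precisely the convention fixed in Section~\ref{2}.
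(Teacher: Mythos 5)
Your proposal is correct and matches the paper's intent: the paper derives this corollary from Corollary~\ref{new cor 1} via the trivial hierarchy $\mathcal{FSTA}\geq\mathcal{OBLOT}$ at fixed scheduler and visibility, which is exactly one of the two equivalent routes you describe (your alternative composition through Corollary~\ref{set 1 cor 2} and the $\mathcal{FSYNCH}\geq\mathcal{SSYNCH}$ step lands in the same place). No gaps.
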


\begin{proof}
    Follows from Corollary \ref{new cor 1}.
\end{proof}

\begin{corollary}\label{set 1 cor 3}
     $\forall$  $R\in \mathcal{R}_4$, $AE\in\mathcal{FCOM_{F.V}^S}$.
\end{corollary}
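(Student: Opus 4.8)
The plan is to observe that Corollary \ref{set 1 cor 3} is a direct consequence of the hierarchy $\mathcal{LUMI} \geq \mathcal{FCOM} \geq \mathcal{OBLOT}$ (with visibility and synchronicity held fixed), which is one of the fundamental comparisons already recorded in the excerpt. Concretely, Lemma \ref{new lemma 1} establishes $AE \in \mathcal{OBLOT_{F.V}^S}$ for every $R \in \mathcal{R}_4$, i.e., there is an algorithm in the $\mathcal{OBLOT}$ model under $SSYNCH$ with full visibility that solves $AE$. Since $\mathcal{FCOM}$ robots subsume all capabilities of $\mathcal{OBLOT}$ robots (an $\mathcal{FCOM}$ robot can simply ignore its light, which reduces to the $\mathcal{OBLOT}$ behaviour), the very same algorithm runs unchanged in $\mathcal{FCOM_{F.V}^S}$ and solves $AE$ there.

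First I would invoke Lemma \ref{new lemma 1} to fix, for an arbitrary $R \in \mathcal{R}_4$, an $\mathcal{OBLOT_{F.V}^S}$ algorithm $\mathcal{A}$ solving $AE$. Next I would note that $Task(\mathcal{OBLOT}, SSYNCH, \mathcal{F.V.}; R) \subseteq Task(\mathcal{FCOM}, SSYNCH, \mathcal{F.V.}; R)$, which is exactly the statement $\mathcal{FCOM_{F.V}^S} \geq \mathcal{OBLOT_{F.V}^S}$ obtained from fundamental comparison (1). Hence $AE \in \mathcal{FCOM_{F.V}^S}$, and since $R$ was arbitrary the claim holds for all $R \in \mathcal{R}_4$. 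In short, the proof is a single line: ``Follows from Lemma \ref{new lemma 1} and the fact that $\mathcal{FCOM} \geq \mathcal{OBLOT}$ when visibility and synchronicity are fixed.''

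There is essentially no obstacle here: the only thing to be careful about is that the reduction $\mathcal{FCOM} \geq \mathcal{OBLOT}$ must be applied with \emph{both} the scheduler ($SSYNCH$) and the visibility ($\mathcal{F.V.}$) held fixed, which is precisely the form in which it is stated in the ``Some Fundamental Comparisons'' subsection, so nothing extra needs to be proved. The argument is fully analogous to the proofs of Corollaries \ref{set 1 cor 2} and \ref{set 1 cor 4}, which likewise just cite the appropriate inclusion between models; one could equally derive it from Corollary \ref{set 1 cor 2} via $\mathcal{FCOM} \geq \mathcal{OBLOT}$ applied to $\mathcal{FSTA}$'s underlying $\mathcal{OBLOT}$ algorithm, but citing Lemma \ref{new lemma 1} directly is cleanest.
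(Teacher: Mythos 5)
Your proposal is correct and matches the paper's own (one-line) proof, which simply cites Lemma \ref{new lemma 1}; the implicit justification in the paper is exactly the hierarchy $\mathcal{FCOM} \geq \mathcal{OBLOT}$ with scheduler and visibility held fixed, which you make explicit. Nothing further is needed.
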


\begin{proof}
    Follows from Lemma \ref{new lemma 1}.
\end{proof}

\begin{corollary}\label{set 1 cor 5}
     $\forall$  $R\in \mathcal{R}_4$, $AE\in\mathcal{FCOM_{F.V}^F}$.
\end{corollary}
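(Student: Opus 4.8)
The plan is to obtain $AE\in\mathcal{FCOM_{F.V}^F}$ with no new construction at all, simply by chaining results already established in the excerpt. First I would recall Corollary~\ref{new cor 1}, which states $AE\in\mathcal{OBLOT_{F.V}^F}$: under full visibility each robot identifies its role in Configuration~(I), so the algorithm has $r_1$ move to the point $A'$ on its side of $BC$ with $\angle A'BE=\theta_2$ while $r_2,r_3,r_4$ stay put, after which all robots detect Configuration~(II) and remain still. The key observation is that this protocol uses neither persistent memory nor communication, so it is already a legal $\mathcal{FCOM}$ protocol.

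The second step is to invoke the fundamental comparison $\mathcal{FCOM}\geq\mathcal{OBLOT}$ (with visibility and synchronicity fixed), listed in Section~\ref{2}: an $\mathcal{FCOM}$ team operating under $FSYNCH$ with full visibility can simulate any $\mathcal{OBLOT}$ protocol verbatim by keeping every light in its initial colour and discarding the colours it reads in the $Look$ phase. Hence $Task(\mathcal{OBLOT},FSYNCH,\mathcal{F.V.};R)\subseteq Task(\mathcal{FCOM},FSYNCH,\mathcal{F.V.};R)$ for every $R\in\mathcal{R}_4$, and Corollary~\ref{new cor 1} then gives $AE\in\mathcal{FCOM_{F.V}^F}$. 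As a cross-check I would note the alternative derivation: Corollary~\ref{set 1 cor 3} already gives $AE\in\mathcal{FCOM_{F.V}^S}$, and since $\mathcal{FSYNCH}\geq\mathcal{SSYNCH}$ with the model and visibility held fixed, the same problem is solvable in $\mathcal{FCOM_{F.V}^F}$.

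I do not expect a genuine obstacle, since the statement is a corollary in the strict sense; the only point that deserves an explicit line is the remark above, namely that the full-visibility algorithm for $AE$ makes no use of the extra capabilities of $\mathcal{FCOM}$, so its correctness carries over unchanged. The proof I would write is therefore one sentence: it follows from Corollary~\ref{new cor 1} together with $\mathcal{FCOM}\geq\mathcal{OBLOT}$ (equivalently, from Corollary~\ref{set 1 cor 3} together with $\mathcal{FSYNCH}\geq\mathcal{SSYNCH}$).
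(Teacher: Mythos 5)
Your proposal is correct and matches the paper's own argument, which derives the corollary directly from Corollary~\ref{new cor 1} (i.e., $AE\in\mathcal{OBLOT_{F.V}^F}$) together with the fundamental comparison $\mathcal{FCOM}\geq\mathcal{OBLOT}$. The alternative route via Corollary~\ref{set 1 cor 3} and $\mathcal{FSYNCH}\geq\mathcal{SSYNCH}$ is also valid but not the one the paper uses.
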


\begin{proof}
    Follows from Corollary \ref{new cor 1}.
\end{proof}

\begin{corollary}\label{set 1 cor 6}
     $\forall$  $R\in \mathcal{R}_4$, $AE\in\mathcal{LUMI_{F.V}^S}$.
\end{corollary}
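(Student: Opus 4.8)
The plan is to obtain this as an immediate consequence of the monotonicity of computational power in the model hierarchy, exactly in the spirit of the preceding corollaries. Recall from the fundamental comparisons that, with visibility and synchronicity held fixed, $\mathcal{LUMI} \geq \mathcal{FCOM} \geq \mathcal{OBLOT}$; in particular $\mathcal{LUMI}^{S}_{\mathcal{F.V.}} \geq \mathcal{OBLOT}^{S}_{\mathcal{F.V.}}$, which by definition means $Task(\mathcal{LUMI}, SSYNCH, \mathcal{F.V.};R) \supseteq Task(\mathcal{OBLOT}, SSYNCH, \mathcal{F.V.};R)$ for every team $R \in \mathcal{R}$. Since Lemma \ref{new lemma 1} places $AE$ in the right-hand set for all $R \in \mathcal{R}_4$, it lies in the left-hand set as well, which is precisely the claim.

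Concretely, the first step is to recall the full-visibility algorithm for $AE$ used in Lemma \ref{new lemma 1}: upon activation, robot $r_1$ identifies its position via the global snapshot and moves to the point $A'$ for which $\angle A'BE = \theta_2$, while $r_2, r_3, r_4$ remain still; once Configuration (II) is recognized no robot moves again. The second step is to observe that this protocol uses neither persistent memory nor communication, so it is trivially a valid $\mathcal{LUMI}$ protocol in which every robot keeps its light at the default color and simply ignores the colors it perceives — a $\mathcal{LUMI}$ robot is always free to behave like an $\mathcal{OBLOT}$ robot. The third step is to note that correctness is unaffected: the scheduler ($SSYNCH$) and the visibility ($\mathcal{F.V.}$) are identical in both settings, so the run analysis from Lemma \ref{new lemma 1} transfers verbatim.

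There is essentially no obstacle here; the only point requiring care is the direction of the inclusion of solvable-task sets, which runs from the weaker model to the stronger one, so one must invoke Lemma \ref{new lemma 1} (the $\mathcal{OBLOT}$ result) rather than attempting anything $\mathcal{LUMI}$-specific. Equivalently, the corollary follows from Corollary \ref{set 1 cor 3} together with $\mathcal{LUMI} \geq \mathcal{FCOM}$, which is the phrasing that most directly mirrors the chain of corollaries already established above.
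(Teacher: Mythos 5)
Your proposal is correct and matches the paper's argument: the paper likewise derives this corollary directly from Lemma \ref{new lemma 1} via the trivial inclusion $\mathcal{LUMI} \geq \mathcal{OBLOT}$ with scheduler and visibility fixed. Your additional observation that the $\mathcal{OBLOT}$ protocol runs unchanged as a $\mathcal{LUMI}$ protocol with lights ignored is exactly the (implicit) justification the paper relies on.
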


\begin{proof}
    Follows from Lemma \ref{new lemma 1}.
\end{proof}

\begin{corollary}\label{new Cor 7}
     $\forall$  $R\in \mathcal{R}_4$, $AE\in\mathcal{LUMI_{F.V}^F}$.
\end{corollary}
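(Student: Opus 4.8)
The plan is to obtain this exactly as was done for Corollaries~\ref{set 1 cor 4} and~\ref{set 1 cor 5}: by a pure inclusion argument rather than by exhibiting a new algorithm. Recall from the fundamental comparisons that, with visibility and synchronicity held fixed, $\mathcal{LUMI} \geq \mathcal{FSTA} \geq \mathcal{OBLOT}$ and $\mathcal{LUMI} \geq \mathcal{FCOM} \geq \mathcal{OBLOT}$; in particular $\mathcal{LUMI}^F_{\mathcal{F.V.}} \geq \mathcal{OBLOT}^F_{\mathcal{F.V.}}$, so $Task(\mathcal{OBLOT}, FSYNCH, \mathcal{F.V.};R) \subseteq Task(\mathcal{LUMI}, FSYNCH, \mathcal{F.V.};R)$ for every team $R$.

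First I would invoke Corollary~\ref{new cor 1}, which already places $AE$ in $\mathcal{OBLOT}^F_{\mathcal{F.V.}}$ for all $R \in \mathcal{R}_4$ (itself an immediate consequence of Lemma~\ref{new lemma 1} together with $\mathcal{FSYNCH} \geq \mathcal{SSYNCH}$). Then, applying the inclusion above with $R$ an arbitrary team in $\mathcal{R}_4$, membership transfers verbatim: the $\mathcal{OBLOT}$ protocol described just before Lemma~\ref{new lemma 1} --- $r_1$ moves to the point $A'$ with $\angle A'BE = \theta_2$ while everyone else stays put --- is already a legal $\mathcal{LUMI}$ protocol that simply never uses the lights. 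Hence $AE \in \mathcal{LUMI}^F_{\mathcal{F.V.}}$ for all $R \in \mathcal{R}_4$.

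There is essentially no obstacle here: this statement is one of the ``easy-direction'' corollaries that collects the full-visibility positive results for $AE$ across the model hierarchy, and the only point worth noting is that the trivial simulation of $\mathcal{OBLOT}$ by $\mathcal{LUMI}$ respects the constraints of the problem (no collisions, and $r_2, r_3$ fixed) --- which it does, since the simulating protocol performs exactly the same moves. If preferred, the same conclusion also follows directly from Corollary~\ref{set 1 cor 4} via $\mathcal{LUMI} \geq \mathcal{FSTA}$, or from Corollary~\ref{set 1 cor 5} via $\mathcal{LUMI} \geq \mathcal{FCOM}$, yielding a one-line alternative citation.
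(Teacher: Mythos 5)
Your proposal is correct and matches the paper's own proof, which simply cites Corollary~\ref{new cor 1} and implicitly relies on the fundamental comparison $\mathcal{LUMI} \geq \mathcal{OBLOT}$ with visibility and synchronicity fixed. You merely spell out the inclusion argument that the paper leaves implicit, so no further commentary is needed.
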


\begin{proof}
    Follows from Corollary \ref{new cor 1}.
\end{proof}
















\subsection{Impossibility of Solving $AE$ Problem in Limited Visibility Model}

\begin{lemma}\label{new lemma 2}

  $\exists$  $R\in \mathcal{R}_4$, $AE\not\in\mathcal{LUMI_{L.V.}^F}$.
\end{lemma}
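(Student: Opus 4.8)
The plan is to exhibit one concrete team $R$ of four robots with a fixed visibility radius $V_r$, together with an uncountable family of initial placements — all of them legitimate instances of $AE$, connected initial visibility graph included — on which no $\mathcal{LUMI}$ protocol can succeed even under the fully synchronous scheduler, and then to obtain a contradiction from a counting argument on the colour set.

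\emph{The construction.} Fix an acute angle $\theta_1$ close to $90^\circ$ and let the parameter $\theta_2$ range over a short interval $I=(\theta_1,\theta_1+\varepsilon)$. In each instance $\mathcal{C}_{\theta_2}$ keep the positions $A,B,C$ — hence the lengths $|AB|,|BC|$ and the angle $\theta_1$ — unchanged, and only place $D=D(\theta_2)$ so that $|CD|$ is a fixed length and $CD$ makes the acute angle $\theta_2$ with $BC$. Taking all three lengths $|AB|,|BC|,|CD|$ slightly below $V_r$ and $\theta_1,\theta_2$ near $90^\circ$, a short computation (a zig-zag placement works) gives $|AC|,|AD|,|BD|>V_r$; thus the initial visibility graph is the path $r_1-r_2-r_3-r_4$, which is connected, so each $\mathcal{C}_{\theta_2}$ is a valid $AE$ instance, and — crucially — $r_1$'s circle of visibility contains only $r_2$, whereas the data ($C$ and $D$) that pin down $\theta_2$ lie entirely outside $r_1$'s sight. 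One also checks that the target point $A'(\theta_2)$, which is within $O(\varepsilon)$ of $A$, is still at distance $>V_r$ from $C$ and $D$.

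\emph{The impossibility.} Suppose some $\mathcal{A}\in\mathcal{LUMI^F_{L.V.}}$ solves $AE$; then it solves every $\mathcal{C}_{\theta_2}$, $\theta_2\in I$. Since $r_2,r_3$ are immobile and the robots must reach and maintain Configuration (II), for each $\theta_2$ there is a round $T(\theta_2)$ such that from round $T(\theta_2)$ on the robot $r_1$ is located at $A'(\theta_2)$. The key point is that $r_1$'s behaviour is a deterministic function of the colour stream it reads off $r_2$: $r_1$ only ever sees $r_2$, which sits at the fixed point $B$, so by a straightforward induction $r_1$'s sequence of snapshots, and with it $r_1$'s whole trajectory, is determined by the sequence $(\ell_2(1),\ell_2(2),\dots)$ of colours displayed by $r_2$, a sequence over the finite colour set. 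Now $I=\bigcup_{n\ge1}\{\theta_2:T(\theta_2)\le n\}$, and since $I$ is uncountable some $\Theta_{n^*}=\{\theta_2:T(\theta_2)\le n^*\}$ is infinite; but the relevant prefixes of the colour stream (those long enough to fix $r_1$'s position at round $n^*$) form a finite set, so two distinct $\theta_2\neq\theta_2'$ in $\Theta_{n^*}$ produce the same prefix, hence identical early snapshots for $r_1$, hence the same position of $r_1$ at round $n^*$. As $T(\theta_2),T(\theta_2')\le n^*$, that common position is simultaneously $A'(\theta_2)$ and $A'(\theta_2')$ — impossible, because $\theta_2\mapsto A'(\theta_2)$ is injective. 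Hence no such $\mathcal{A}$ exists. (Because $\mathcal{LUMI}$ dominates the other three models and $\mathcal{F}$ dominates $\mathcal{S}$, the same team $R$ witnesses $AE\notin\mathcal{M^X_{L.V.}}$ for every model $\mathcal{M}$ and scheduler $\mathcal{X}$.)

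\emph{Where the difficulty lies.} The delicate step is the claim that $r_1$ never sees anything except $r_2$. An adversarial $\mathcal{LUMI}$ protocol may send $r_1$ roaming, and since the $V_r$-disks around $A,B,C,D$ overlap, $r_1$ could in principle travel to a vantage point from which it sees $C$ and $D$ and reads $\theta_2$ off their positions, which would break the induction above. Ruling this out is the crux. One option is to tune the geometry so that no useful vantage point is reachable; the route I would actually push is to argue that such a detour is worthless because obliviousness prevents $r_1$ from retaining its starting point $A$ — needed to recover $|AB|$ and the side of $BC$ on which $A'$ lies — at the same time as it learns $\theta_2$; making this rigorous requires a slightly more elaborate pair (or family) of instances engineered to remain indistinguishable to $r_1$ along whatever trajectory $\mathcal{A}$ prescribes. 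The remaining ingredients — the geometric realisability of the family and the propagation of the impossibility to the weaker models and schedulers — are routine.
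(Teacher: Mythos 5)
Your overall strategy --- an uncountable family of instances $\mathcal{C}_{\theta_2}$ that are meant to be indistinguishable to $r_1$, played off against the finite colour alphabet by a counting argument on prefixes --- is a genuinely different and, in principle, more rigorous route than the paper's. The paper argues directly and informally: it sets $V_r=|BC|+\epsilon$ adversarially, asserts that $r_1$ may only move along the segment $AB$ (so $C$ never enters its circle of visibility), and observes that $r_2$ and $r_3$, who can measure the angles, cannot encode a real-valued angle in finitely many lights. Your finite-prefix argument is exactly the right way to make that last observation precise, and your realisation of the path visibility graph $r_1-r_2-r_3-r_4$ is geometrically sound.

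However, the step you yourself flag as ``the crux'' is a genuine, unclosed gap, and with your chosen geometry it is not merely technical. Since $|BC|<V_r$ and $|CD|<V_r$, every point in a neighbourhood of $C$ is within distance $V_r$ of $B$, $C$ and $D(\theta_2)$ simultaneously; a protocol can therefore march $r_1$ from $A$ towards $C$ and, after finitely many rounds, hand it a snapshot containing both $C$ and $D$, from which $\theta_2$ is read off exactly. At that moment the induction ``$r_1$'s trajectory is a deterministic function of $r_2$'s colour stream'' is false, two instances $\theta_2\neq\theta_2'$ cease to be indistinguishable to $r_1$, and the counting argument collapses. Whether such a roaming protocol actually solves $AE$ then hinges on what Configuration (II) demands beyond the angle --- the length $|BA'|$ and the side of line $BC$, both of which $r_1$ irretrievably loses on leaving $A$, being real-valued (respectively chirality-dependent) data that cannot be held in finitely many colours. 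To close the lemma you must run the indistinguishability and counting argument over that residual continuum of data along \emph{every} trajectory the protocol may prescribe, not only over $\theta_2$ in the initial configuration. Until that is done the proof is a correct skeleton with its load-bearing step missing --- the same step, incidentally, that the paper's own proof disposes of only via the unjustified assertion that $r_1$ ``has to move preserving the angle $\theta_1$'' along $AB$.
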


\begin{proof}
    Let there exists an Algorithm $\mathcal{A}$ to solve the problem in $\mathcal{LUMI_{L.V.}^F}$. If Configuration (II) has to be formed from Configuration (I) then the robot $r_1$ must know the value of the angle it has to form. If $r_1$ has to move to the position $A'$ such that the $\angle A'BE = \theta_2$, then the robot $r_1$ must know the position of two robots $r_3$ and $r_4$ respectively in the initial configuration, i.e., the positions $C$ and $D$ respectively. Unless the position $C$ is known, the robot $r_1$ cannot perceive that it has to form the angle with respect to the extended line of the line segment. And unless it knows the position $D$, it cannot perceive the value $\theta_2$ that it has to form. But if $V_r = BC + \epsilon$, then it is not possible for the robot $r_1$ to see them from the initial configuration. Also according the requirement of the problem the robots $r_2$ and $r_3$ cannot move. Therefore to solve the problem $r_1$ must move. Now, if $r_1$ has to move, unless $r_1$ performs the required move to form Configuration (II) in one move, it has to move preserving the angle $\theta_1$. This is because $r_1$ does not know the fact that $\theta_1 < \theta_2$. The argument holds for $r_4$. And we have already seen that the initial configuration does not give the required information to form Configuration (II) in one move. 

    Now the only way the robot $r_1$ can move preserving the angle, is by moving along the line segment $AB$. Now note that if $r_1$ reaches $B$, the angle becomes $0$. Also as collisions are not allowed the robot $r_1$ cannot cross $B$. Similarly, the robot $r_3$ can only move along line segment $CD$ and it cannot cross the position $C$. Now, by moving along the line segments $AB$ and $CD$ respectively, however much the two robots $r_1$ and $r_4$ may come closer to $B$ and $C$ respectively, the adversary may choose the value of $\epsilon$ in such a way that the position $C$ is outside the visibility range of $r_1$ and $B$ is outside the visibility range of $r_2$. Note that the robots $r_2$ and $r_3$ cannot see $r_4$ and $r_1$ respectively, therefore it is also unknown to them which robot should perform the required move to form Configuration (II). Though $r_2$ and $r_3$ can measure the angles $\theta_1$ and $\theta_2$ respectively. It is not possible to store the value of the angles with finite number of lights. Hence the problem cannot be solved.
\end{proof}

From Lemma \ref{new lemma 2} follows:

\begin{corollary}\label{set 2 cor 1}
    $\exists$  $R\in \mathcal{R}_4$, $AE\not\in\mathcal{LUMI_{L.V.}^S}$
\end{corollary}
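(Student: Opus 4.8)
The plan is to derive this corollary as an immediate consequence of Lemma~\ref{new lemma 2} together with the fundamental comparison $\mathcal{FSYNCH} \geq \mathcal{SSYNCH}$ (item~2 of the ``Some Fundamental Comparisons'' list), specialized to the $\mathcal{LUMI}$ model under limited visibility. The key observation is that if a problem were solvable in $\mathcal{LUMI_{L.V.}^S}$, then the same algorithm would also solve it in $\mathcal{LUMI_{L.V.}^F}$, because a fully synchronous scheduler is a particular case of a semi-synchronous one (every robot is activated in every round), and an algorithm that is correct against every semi-synchronous adversary is in particular correct against this restricted adversary. Hence $\mathcal{LUMI_{L.V.}^F} \geq \mathcal{LUMI_{L.V.}^S}$, i.e., $Task(\mathcal{LUMI}, SSYNCH, \mathcal{L.V.};R) \subseteq Task(\mathcal{LUMI}, FSYNCH, \mathcal{L.V.};R)$ for every $R$.

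First I would fix the team $R \in \mathcal{R}_4$ whose existence is asserted in Lemma~\ref{new lemma 2}, namely the four robots in Configuration~(I) with the visibility radius $V_r$ tuned so that $r_1$ cannot see $C$ and $D$ and $r_3$ cannot see $A$ and $B$. Lemma~\ref{new lemma 2} tells us $AE \notin \mathcal{LUMI_{L.V.}^F}(R)$, that is, $AE \notin Task(\mathcal{LUMI}, FSYNCH, \mathcal{L.V.};R)$. Next I would argue by contradiction: suppose $AE \in \mathcal{LUMI_{L.V.}^S}(R)$. By the containment $Task(\mathcal{LUMI}, SSYNCH, \mathcal{L.V.};R) \subseteq Task(\mathcal{LUMI}, FSYNCH, \mathcal{L.V.};R)$ noted above, this would give $AE \in \mathcal{LUMI_{L.V.}^F}(R)$, contradicting Lemma~\ref{new lemma 2}. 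Therefore $AE \notin \mathcal{LUMI_{L.V.}^S}(R)$, which is exactly the statement of the corollary (with the same witnessing team $R$).

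There is essentially no obstacle here: the only thing to be careful about is the direction of the inequality. Strengthening the scheduler (from semi-synchronous to fully synchronous) enlarges the set of solvable problems, so unsolvability propagates \emph{downward} along the scheduler hierarchy --- a problem unsolvable under the stronger (fully synchronous) scheduler is a fortiori unsolvable under the weaker (semi-synchronous) one. Since the impossibility argument of Lemma~\ref{new lemma 2} already pins down a specific $R \in \mathcal{R}_4$, no new construction is needed; the corollary follows by a one-line monotonicity argument. Indeed, this is precisely why the authors phrase it as ``From Lemma~\ref{new lemma 2} follows.''
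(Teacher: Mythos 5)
Your proof is correct and matches the paper's intent exactly: the paper derives this corollary from Lemma~2 via the standard monotonicity $Task(\mathcal{LUMI}, SSYNCH, \mathcal{L.V.};R) \subseteq Task(\mathcal{LUMI}, FSYNCH, \mathcal{L.V.};R)$, which is precisely your argument, with the direction of the implication handled correctly. Nothing further is needed.
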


\begin{corollary}\label{set 2 cor 2}
    $\exists$  $R\in \mathcal{R}_4$, $AE\not\in\mathcal{FSTA_{L.V.}^F}$
\end{corollary}

\begin{corollary}\label{set 2 cor 4}
    $\exists$  $R\in \mathcal{R}_4$, $AE\not\in\mathcal{FSTA_{L.V.}^S}$
\end{corollary}

\begin{proof}
    Follows from Corollary \ref{set 2 cor 1}
\end{proof}

\begin{corollary}\label{set 2 cor 3}
    $\exists$  $R\in \mathcal{R}_4$, $AE\not\in\mathcal{FCOM_{L.V.}^F}$
\end{corollary}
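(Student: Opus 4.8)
The plan is to obtain this as an immediate consequence of Lemma~\ref{new lemma 2} together with the elementary monotonicity of computational power in the model hierarchy. Recall from the fundamental comparisons that $\mathcal{LUMI} \geq \mathcal{FCOM}$ whenever visibility and synchronicity are held fixed; specialising to limited visibility and the fully synchronous scheduler gives $\mathcal{LUMI^F_{L.V.}} \geq \mathcal{FCOM^F_{L.V.}}$, i.e.\ $Task(\mathcal{FCOM}, F, \mathcal{L.V.}; R) \subseteq Task(\mathcal{LUMI}, F, \mathcal{L.V.}; R)$ for every team $R \in \mathcal{R}$. Intuitively, any $\mathcal{FCOM}$ algorithm is also a legal $\mathcal{LUMI}$ algorithm (the robots simply ignore the extra ability to read their own light), so solvability transfers upward and, contrapositively, unsolvability transfers downward.

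First I would invoke Lemma~\ref{new lemma 2} to fix a team $R \in \mathcal{R}_4$ (the four robots placed in Configuration~(I) with $V_r = BC + \epsilon$, $\epsilon$ chosen adversarially as in that proof) for which $AE \notin \mathcal{LUMI^F_{L.V.}}$, that is, $AE \notin Task(\mathcal{LUMI}, F, \mathcal{L.V.}; R)$. Then, by the inclusion $Task(\mathcal{FCOM}, F, \mathcal{L.V.}; R) \subseteq Task(\mathcal{LUMI}, F, \mathcal{L.V.}; R)$, membership of $AE$ in the $\mathcal{FCOM}$ task set would force membership in the $\mathcal{LUMI}$ task set, a contradiction. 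Hence $AE \notin Task(\mathcal{FCOM}, F, \mathcal{L.V.}; R)$, which is exactly the claim that $\exists R \in \mathcal{R}_4$ with $AE \notin \mathcal{FCOM^F_{L.V.}}$.

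There is essentially no obstacle here: the argument is purely the downward-closure of impossibility along $\mathcal{LUMI} \geq \mathcal{FCOM}$, exactly parallel to how Corollary~\ref{set 2 cor 2} follows for $\mathcal{FSTA}$. The only point worth stating explicitly is that the \emph{same} witness team from Lemma~\ref{new lemma 2} is reused, so no new geometric construction or choice of $V_r$ is needed; the quantifier structure ($\exists R$) is preserved because the inclusion of task sets holds for every fixed $R$. Thus the proof reduces to a single line: it follows from Lemma~\ref{new lemma 2} and $\mathcal{LUMI^F_{L.V.}} \geq \mathcal{FCOM^F_{L.V.}}$.
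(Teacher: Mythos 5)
Your proposal is correct and matches the paper's own (implicit) argument: the paper derives this corollary directly from Lemma~\ref{new lemma 2} via the downward transfer of impossibility along $\mathcal{LUMI} \geq \mathcal{FCOM}$ with visibility and scheduler fixed, using the same witness team. Nothing further is needed.
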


\begin{corollary}\label{set 2 cor 5}
    $\exists$  $R\in \mathcal{R}_4$, $AE\not\in\mathcal{FCOM_{L.V.}^S}$
\end{corollary}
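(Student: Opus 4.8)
The plan is to obtain Corollary \ref{set 2 cor 5} purely by monotonicity of the $Task$ operator, in exactly the same spirit as Corollaries \ref{set 2 cor 1}--\ref{set 2 cor 3}: no fresh adversarial construction is required, because the genuine impossibility argument has already been done once and for all in Lemma \ref{new lemma 2}. So the whole proof should amount to invoking an earlier impossibility corollary together with one of the fundamental comparisons of Section \ref{2}.

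Concretely, I would proceed as follows. By Corollary \ref{set 2 cor 3} there is a team $R \in \mathcal{R}_4$ with $AE \notin \mathcal{FCOM^F_{L.V.}}$, i.e.\ $AE \notin Task(\mathcal{FCOM}, FSYNCH, \mathcal{L.V.}; R)$. Fix this very team $R$. Using the fundamental comparison $\mathcal{FSYNCH} \geq \mathcal{SSYNCH}$ with the model and the visibility held fixed, we get the inclusion $Task(\mathcal{FCOM}, FSYNCH, \mathcal{L.V.}; R) \supseteq Task(\mathcal{FCOM}, SSYNCH, \mathcal{L.V.}; R)$, and therefore $AE \notin Task(\mathcal{FCOM}, SSYNCH, \mathcal{L.V.}; R) = \mathcal{FCOM^S_{L.V.}}(R)$, which is precisely the claim. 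An equally short alternative route is through Corollary \ref{set 2 cor 1} and the comparison $\mathcal{LUMI} \geq \mathcal{FCOM}$: the same team $R$ that witnesses $AE \notin \mathcal{LUMI^S_{L.V.}}$ also witnesses $AE \notin \mathcal{FCOM^S_{L.V.}}$, since $Task(\mathcal{LUMI}, SSYNCH, \mathcal{L.V.}; R) \supseteq Task(\mathcal{FCOM}, SSYNCH, \mathcal{L.V.}; R)$.

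I do not expect any real obstacle here. The only point deserving a moment's care is that the existentially quantified team $R$ supplied by the earlier corollary must be \emph{reused unchanged} rather than re-selected, so that the inclusion between the two $Task$-sets applies to one and the same team; once this is noted, the implication is immediate, and the statement follows.
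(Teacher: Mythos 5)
Your proposal is correct and matches the paper's approach: the paper derives this corollary directly from Corollary \ref{set 2 cor 1} via the comparison $\mathcal{LUMI} \geq \mathcal{FCOM}$, which is exactly your second (alternative) route, and your primary route via Corollary \ref{set 2 cor 3} and $\mathcal{FSYNCH} \geq \mathcal{SSYNCH}$ is an equally valid one-line monotonicity argument. Your remark about reusing the same witness team $R$ is the only point of care, and you handle it correctly.
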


\begin{proof}
    Follows from Corollary \ref{set 2 cor 1}
\end{proof}

\begin{corollary}\label{set 2 cor 6}
    $\exists$  $R\in \mathcal{R}_4$, $AE\not\in\mathcal{OBLOT_{L.V.}^F}$
\end{corollary}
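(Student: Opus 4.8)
The plan is to obtain this as an immediate consequence of Lemma~\ref{new lemma 2} combined with the first of the fundamental comparisons, namely $\mathcal{LUMI} \geq \mathcal{FSTA} \geq \mathcal{OBLOT}$ when visibility and synchronicity are held fixed. Fixing the fully synchronous scheduler and the limited visibility setting $\mathcal{L.V.}$, this gives, for every team $R$, the inclusion $Task(\mathcal{OBLOT}, \mathcal{F}, \mathcal{L.V.}; R) \subseteq Task(\mathcal{LUMI}, \mathcal{F}, \mathcal{L.V.}; R)$: any $\mathcal{OBLOT}$ protocol is in particular a $\mathcal{LUMI}$ protocol that happens never to use its lights and never to consult persistent memory, so whatever $\mathcal{OBLOT}$ can solve, $\mathcal{LUMI}$ can solve as well.

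Next I would take the witness team $R \in \mathcal{R}_4$ exhibited in the proof of Lemma~\ref{new lemma 2} — the four robots $r_1,r_2,r_3,r_4$ in Configuration~(I) with visibility radius $V_r = BC + \epsilon$ for a suitably small $\epsilon > 0$. By that lemma, $AE \notin Task(\mathcal{LUMI}, \mathcal{F}, \mathcal{L.V.}; R)$; by the inclusion just noted, the very same team satisfies $AE \notin Task(\mathcal{OBLOT}, \mathcal{F}, \mathcal{L.V.}; R)$. Hence $\exists R \in \mathcal{R}_4$ with $AE \notin \mathcal{OBLOT_{L.V.}^F}$, which is exactly the claim.

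There is essentially no obstacle here, since the statement is strictly weaker than Lemma~\ref{new lemma 2}: that lemma already does all the geometric work. For completeness one could instead reprove it directly from scratch, noting that the argument only simplifies in the $\mathcal{OBLOT}$ case — an $\mathcal{OBLOT}$ robot carries no light at all, so the closing remark about the impossibility of storing the angle values $\theta_1,\theta_2$ in finitely many colors becomes vacuous, while the geometric core (that $r_1$ cannot learn the positions $C$ and $D$ from the initial configuration, that $r_1$ and $r_3$ may only slide along $AB$ and $CD$ respectively without ever crossing $B$ and $C$, and that the adversary can pick $\epsilon$ so that $C$ stays outside $r_1$'s range and $B$ stays outside $r_2$'s range) is unchanged. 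The only point requiring a little care is to quote the monotonicity in the correct direction: a more powerful model solves a superset of problems, so a problem unsolvable in $\mathcal{LUMI}$ is \emph{a fortiori} unsolvable in $\mathcal{OBLOT}$ under the same scheduler and visibility.
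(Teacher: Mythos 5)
Your proof is correct and takes essentially the same route as the paper: the paper simply states that this corollary ``follows from Lemma~\ref{new lemma 2}'', implicitly invoking exactly the monotonicity $\mathcal{LUMI} \geq \mathcal{OBLOT}$ (under fixed scheduler and visibility) that you spell out, applied to the same witness team. Your explicit statement of the direction of the implication --- unsolvable in the stronger model, hence unsolvable in the weaker --- is the one point the paper leaves tacit, and you have it right.
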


\begin{corollary}\label{set 2 cor 7}
    $\exists$  $R\in \mathcal{R}_4$, $AE\not\in\mathcal{OBLOT_{L.V.}^S}$
\end{corollary}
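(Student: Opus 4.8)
The plan is to derive Corollary~\ref{set 2 cor 7} purely by monotonicity, reusing an impossibility already established for a richer setting rather than re-running any adversarial argument. Concretely, I keep the very same team $R\in\mathcal{R}_4$ and the same initial Configuration~(I) that witnesses the impossibility in Lemma~\ref{new lemma 2} (the configuration with $V_r = BC + \epsilon$), and I argue this witness also defeats every $\mathcal{OBLOT}$ algorithm under a semi-synchronous scheduler with limited visibility.

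The cleanest route mirrors the structure already used for Corollaries~\ref{set 2 cor 4} and \ref{set 2 cor 5}: invoke Corollary~\ref{set 2 cor 1}, which gives $AE\not\in\mathcal{LUMI_{L.V.}^S}$ for a suitable $R$, together with the fundamental comparison $\mathcal{LUMI^S_{L.V.}}\geq\mathcal{OBLOT^S_{L.V.}}$ (item~1 of the fundamental comparisons, with visibility and synchronicity fixed). Since $Task(\mathcal{LUMI},SSYNCH,\mathcal{L.V.};R)\supseteq Task(\mathcal{OBLOT},SSYNCH,\mathcal{L.V.};R)$, a problem not in the left-hand set is not in the right-hand set either, so $AE\not\in\mathcal{OBLOT^S_{L.V.}}$ for the same $R$. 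An equivalent alternative is to start from Corollary~\ref{set 2 cor 6} ($AE\not\in\mathcal{OBLOT^F_{L.V.}}$) and use $\mathcal{OBLOT^F_{L.V.}}\geq\mathcal{OBLOT^S_{L.V.}}$ (item~2, model and visibility fixed); either chain lands on the statement.

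There is essentially no obstacle here — all of the actual work was discharged once, in Lemma~\ref{new lemma 2}, where limited visibility with $V_r = BC+\epsilon$ was shown to deny $r_1$ (resp.\ $r_4$) the data ($C$, $D$, and hence $\theta_2$) needed to produce Configuration~(II) in one step while forcing any angle-preserving motion to collapse the angle at $B$ (resp.\ $C$), and where a finite color set was shown too small for $r_2,r_3$ to store the measured angles. The only point requiring a line of care is that the quantifiers compose correctly: Lemma~\ref{new lemma 2} and its downstream corollaries are \emph{existential} over the team, and passing from a stronger model/scheduler to a weaker one under $\supseteq$ preserves the \emph{same} witnessing $R$, so no fresh construction of a configuration is needed for the $\mathcal{OBLOT}$, semi-synchronous case.
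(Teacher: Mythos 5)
Your primary route — deriving the statement from Corollary~\ref{set 2 cor 1} via the fundamental comparison $\mathcal{LUMI}\geq\mathcal{OBLOT}$ (with scheduler and visibility fixed), keeping the same witnessing team $R$ — is exactly the paper's proof, which simply cites Corollary~\ref{set 2 cor 1}. The proposal is correct and matches the paper's approach.
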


\begin{proof}
    Follows from Corollary \ref{set 2 cor 1}
\end{proof}











We get the following results:

\begin{theorem}
$\mathcal{OBLOT^F_{F.V.}> OBLOT^F_{L.V.}}$
\end{theorem}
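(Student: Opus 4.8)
The plan is to establish the two ingredients that the definition of $>$ requires: first, that $\mathcal{OBLOT^F_{F.V.}} \geq \mathcal{OBLOT^F_{L.V.}}$, i.e. every problem solvable under limited visibility is also solvable under full visibility (with the same model and scheduler); and second, that the inclusion is strict, i.e. there is a team $R$ and a problem solvable under full visibility but not under limited visibility. The second ingredient is already in hand: Corollary~\ref{new cor 1} gives $AE \in \mathcal{OBLOT^F_{F.V.}}$ for every $R \in \mathcal{R}_4$, while Corollary~\ref{set 2 cor 6} gives a team $R \in \mathcal{R}_4$ with $AE \notin \mathcal{OBLOT^F_{L.V.}}$. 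Hence $AE \in \mathcal{OBLOT^F_{F.V.}}(R) \setminus \mathcal{OBLOT^F_{L.V.}}(R)$, which is exactly the witness required for strictness.

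So the real work is the first ingredient, the containment $\mathcal{OBLOT^F_{F.V.}} \geq \mathcal{OBLOT^F_{L.V.}}$. I would argue this by a simulation/monotonicity argument: fix any team $R$ and any problem $P \in Task(\mathcal{OBLOT}, FSYNCH, \mathcal{L.V.}; R)$, witnessed by an algorithm $\mathcal{A}$. I claim the very same algorithm $\mathcal{A}$ solves $P$ under full visibility. The point is that a robot with full visibility sees, in its $Look$ phase, a superset of what a limited-visibility robot sees — in fact it sees the entire configuration. From this richer snapshot the robot can reconstruct the limited-visibility snapshot it \emph{would} have seen: it simply discards every robot whose distance from itself exceeds $V_r$. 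Feeding this reconstructed snapshot into $\mathcal{A}$'s $Compute$ step produces exactly the destination $\mathcal{A}$ would have computed under limited visibility. Since the scheduler ($FSYNCH$) and the model ($\mathcal{OBLOT}$, which carries no light state to worry about) are identical on both sides, the entire execution — configuration by configuration, round by round — coincides with a legal execution of $\mathcal{A}$ in the limited-visibility setting, against the same adversarial (here, trivial) activation pattern. Therefore $P$ is solved. This is essentially the general fact stated as item~3 of the ``Some Fundamental Comparisons'' subsection, $\mathcal{F.V.} \geq \mathcal{L.V.}$ with model and synchronicity fixed; the proof here is just an instantiation of it for $M = \mathcal{OBLOT}$, $S = FSYNCH$.

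I would then assemble the two pieces: the containment from the simulation argument plus the strict-separation witness $AE$, invoked via Corollaries~\ref{new cor 1} and~\ref{set 2 cor 6}, together give $\mathcal{OBLOT^F_{F.V.}} > \mathcal{OBLOT^F_{L.V.}}$ by the definition of computationally more powerful.

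The main obstacle, such as it is, is conceptual rather than technical: one must be careful that the ``reconstruct the restricted snapshot'' simulation is legitimate, i.e. that a full-visibility robot really can determine which other robots lie within distance $V_r$ of itself (it can, since it perceives all positions in its own coordinate system and $V_r$ is a fixed known constant), and that $\mathcal{OBLOT}$'s obliviousness and silence mean there is no hidden state that the simulation fails to carry across (there is none). Once those observations are in place the proof is a one-line appeal to fundamental comparison item~3 plus the $AE$ separation, so I would keep the write-up short and lean on the already-proved corollaries.
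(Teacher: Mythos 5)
Your proof is correct and matches the paper's: the paper also derives this theorem directly from Corollary~\ref{new cor 1} together with Corollary~\ref{set 2 cor 6}, relying on the trivial containment $\mathcal{F.V.} \geq \mathcal{L.V.}$ from the fundamental comparisons. The only difference is that you spell out the restriction-of-snapshot simulation behind that containment, which the paper leaves implicit.
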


\begin{proof}
   From  Corollary \ref{new cor 1} and Corollary \ref{set 2 cor 6}.
\end{proof}

\begin{theorem}
$\mathcal{OBLOT^S_{F.V.}> OBLOT^S_{L.V.}}$
\end{theorem}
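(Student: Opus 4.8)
The plan is to mirror the proof of the preceding theorem, replacing its fully synchronous ingredients with their semi-synchronous counterparts, both of which are already established in the excerpt. To obtain a strict relation $\mathcal{OBLOT^S_{F.V.}> OBLOT^S_{L.V.}}$ I need two components: the ``not weaker'' relation $\mathcal{OBLOT^S_{F.V.}} \geq \mathcal{OBLOT^S_{L.V.}}$, and a team $R \in \mathcal{R}$ witnessing a strict gap, i.e.\ a problem lying in $Task(\mathcal{OBLOT}, SSYNCH, \mathcal{F.V.}; R)$ but not in $Task(\mathcal{OBLOT}, SSYNCH, \mathcal{L.V.}; R)$.

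For the first component I would invoke the fundamental comparison $\mathcal{F.V.} \geq \mathcal{L.V.}$ valid when the model and synchronicity are held fixed (item 3 of the fundamental comparisons): any algorithm solving a problem under limited visibility is a fortiori an algorithm solving it under full visibility, since a fully visible robot may always discard the data about robots lying beyond any prescribed radius $V_r$. Hence $Task(\mathcal{OBLOT}, SSYNCH, \mathcal{F.V.}; R) \supseteq Task(\mathcal{OBLOT}, SSYNCH, \mathcal{L.V.}; R)$ for every $R \in \mathcal{R}$.

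For the separating witness I would use the Angle Equalization problem $AE$ on four robots. Lemma~\ref{new lemma 1} already gives $AE \in \mathcal{OBLOT^S_{F.V.}}$ for every $R \in \mathcal{R}_4$, while Corollary~\ref{set 2 cor 7} gives some $R \in \mathcal{R}_4$ with $AE \notin \mathcal{OBLOT^S_{L.V.}}$. Fixing that team $R$ yields $AE \in Task(\mathcal{OBLOT}, SSYNCH, \mathcal{F.V.}; R) \setminus Task(\mathcal{OBLOT}, SSYNCH, \mathcal{L.V.}; R) \neq \emptyset$, which combined with the first component is exactly the definition of $\mathcal{OBLOT^S_{F.V.}> OBLOT^S_{L.V.}}$. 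The proof thus collapses to the one-line citation ``From Lemma~\ref{new lemma 1} and Corollary~\ref{set 2 cor 7}''.

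Since all the nontrivial reasoning has been carried out elsewhere, I do not anticipate a genuine obstacle here; the only point worth double-checking is that the adversarial construction behind Corollary~\ref{set 2 cor 7} — inherited through Corollary~\ref{set 2 cor 1} from the argument of Lemma~\ref{new lemma 2}, which sets $V_r = BC + \epsilon$ so that $r_1$ can never see $C$ or $D$ — is equally valid under the semi-synchronous scheduler. It is, because that impossibility argument never exploits simultaneous activation of all robots: it uses only that $r_2$ and $r_3$ must remain fixed, that any motion of $r_1$ preserving $\theta_1$ forces it along segment $AB$ without crossing $B$ (and symmetrically for $r_4$ along $CD$), and that finitely many colours cannot encode the real-valued angle $\theta_2$. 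These hold verbatim under $SSYNCH$, so the semi-synchronous impossibility — and hence the theorem — follows.
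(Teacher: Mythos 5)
Your proposal matches the paper's proof exactly: the paper also cites Lemma~\ref{new lemma 1} for solvability of $AE$ in $\mathcal{OBLOT^S_{F.V.}}$ and Corollary~\ref{set 2 cor 7} for its unsolvability in $\mathcal{OBLOT^S_{L.V.}}$, with the $\geq$ direction supplied by the fundamental comparison $\mathcal{F.V.} \geq \mathcal{L.V.}$. Your additional check that the impossibility argument of Lemma~\ref{new lemma 2} survives under $SSYNCH$ is a sensible (and correct) sanity check, but it is the same route the paper takes.
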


\begin{proof}
    From  Lemma \ref{new lemma 1} and Corollary \ref{set 2 cor 7}.
\end{proof}

\begin{theorem}
$\mathcal{FSTA^F_{F.V.}> FSTA^F_{L.V.}}$
\end{theorem}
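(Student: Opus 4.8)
The plan is to unpack the definition of ``computationally more powerful'' and verify its two constituent parts, both of which are already available from the material developed above. Recall that $\mathcal{FSTA^F_{F.V.}} > \mathcal{FSTA^F_{L.V.}}$ means (i) $\mathcal{FSTA^F_{F.V.}} \geq \mathcal{FSTA^F_{L.V.}}$, and (ii) there is a team $R \in \mathcal{R}$ and a problem in $Task(\mathcal{FSTA}, \mathcal{F}, \mathcal{F.V.}; R)$ that is not in $Task(\mathcal{FSTA}, \mathcal{F}, \mathcal{L.V.}; R)$.

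For part (i) I would simply invoke the third item among the fundamental comparisons, $\mathcal{F.V.} \geq \mathcal{L.V.}$ with the model fixed to $\mathcal{FSTA}$ and the scheduler fixed to FSYNCH: an algorithm that already works when each robot only sees within radius $V_r$ continues to work when each robot sees the entire plane, since the limited-visibility snapshot is a restriction of the full-visibility one and the robot is free to ignore the extra information. Hence $Task(\mathcal{FSTA}, \mathcal{F}, \mathcal{F.V.}; R) \supseteq Task(\mathcal{FSTA}, \mathcal{F}, \mathcal{L.V.}; R)$ for every $R$, which is exactly the ``$\geq$'' relation.

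For part (ii) the witness is the problem $AE$ on a four-robot team. By Corollary~\ref{set 1 cor 4}, $AE \in \mathcal{FSTA^F_{F.V.}}$ for every $R \in \mathcal{R}_4$; by Corollary~\ref{set 2 cor 2}, there exists $R \in \mathcal{R}_4$ with $AE \notin \mathcal{FSTA^F_{L.V.}}$. Fixing that particular team $R$, we get $AE \in Task(\mathcal{FSTA}, \mathcal{F}, \mathcal{F.V.}; R) \setminus Task(\mathcal{FSTA}, \mathcal{F}, \mathcal{L.V.}; R)$, so this set difference is non-empty. Combining (i) and (ii) gives the claimed strict separation.

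There is no real obstacle: all the technical weight lives in the $AE$ construction and in Lemma~\ref{new lemma 2} (the impossibility under limited visibility), both already established. The only point that needs a moment's care is the quantifier bookkeeping — the positive statement is universal over $\mathcal{R}_4$ while the negative statement is existential — but this is precisely the asymmetry that the definition of ``$>$'' demands, so one team simultaneously certifying membership on the full-visibility side and non-membership on the limited-visibility side is exactly what is required, and such a team is furnished by Corollary~\ref{set 2 cor 2}.
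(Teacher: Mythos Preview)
Your proof is correct and follows essentially the same approach as the paper, which simply cites Corollary~\ref{set 1 cor 4} and Corollary~\ref{set 2 cor 2}. You have just made explicit the ``$\geq$'' direction (via the fundamental comparison $\mathcal{F.V.} \geq \mathcal{L.V.}$) and the quantifier bookkeeping that the paper leaves implicit.
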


\begin{proof}
    From  Corollary \ref{set 1 cor 4} and Corollary \ref{set 2 cor 2}.
\end{proof}

\begin{theorem}
$\mathcal{FSTA^S_{F.V.}> FSTA^S_{L.V.}}$
\end{theorem}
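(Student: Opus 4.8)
The plan is to reuse the Angle Equalization problem $AE$ as the separating witness, exactly in the pattern of the three preceding theorems. Recall that ``$\mathcal{FSTA^S_{F.V.}} > \mathcal{FSTA^S_{L.V.}}$'' unpacks, by the definitions in Section~\ref{2}, into two obligations: first, that $\mathcal{FSTA^S_{F.V.}} \geq \mathcal{FSTA^S_{L.V.}}$, i.e.\ $Task(\mathcal{FSTA}, SSYNCH, \mathcal{F.V.};R) \supseteq Task(\mathcal{FSTA}, SSYNCH, \mathcal{L.V.};R)$ for every $R \in \mathcal{R}$; and second, that there is some team $R$ for which the containment is proper.

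For the first obligation I would simply invoke the fundamental comparison $\mathcal{F.V.} \geq \mathcal{L.V.}$ (item 3 of the ``Some Fundamental Comparisons'' list), applied with the model fixed to $\mathcal{FSTA}$ and the scheduler fixed to $SSYNCH$: any algorithm that solves a problem using only the limited-visibility snapshot continues to solve it when the robot is handed the full snapshot, since it can always ignore the extra information (more precisely, restrict the observed configuration to the disk of radius $V_r$ before running the protocol). Hence no generality is lost and the ``$\geq$'' relation holds.

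For the second obligation I would point to $AE$. By Corollary~\ref{set 1 cor 2}, for every $R \in \mathcal{R}_4$ we have $AE \in \mathcal{FSTA_{F.V}^S}$, i.e.\ $AE \in Task(\mathcal{FSTA}, SSYNCH, \mathcal{F.V.};R)$. By Corollary~\ref{set 2 cor 4}, there exists $R \in \mathcal{R}_4$ with $AE \not\in \mathcal{FSTA_{L.V.}^S}$, i.e.\ $AE \notin Task(\mathcal{FSTA}, SSYNCH, \mathcal{L.V.};R)$. Taking this particular $R$, the set $Task(\mathcal{FSTA}, SSYNCH, \mathcal{F.V.};R) \setminus Task(\mathcal{FSTA}, SSYNCH, \mathcal{L.V.};R)$ contains $AE$ and is therefore nonempty, which is precisely the strictness condition. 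Combining the two obligations yields $\mathcal{FSTA^S_{F.V.}} > \mathcal{FSTA^S_{L.V.}}$.

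There is essentially no obstacle left at this stage: all the real work — the full-visibility algorithm of Lemma~\ref{new lemma 1} and its $\mathcal{FSTA}$ corollary, together with the limited-visibility impossibility argument of Lemma~\ref{new lemma 2} and its $\mathcal{FSTA}$ corollary — has already been carried out. The only point that merits a line of care is the bookkeeping: making sure the \emph{same} team $R \in \mathcal{R}_4$ (the adversarial initial configuration with $V_r = BC + \epsilon$ from the proof of Lemma~\ref{new lemma 2}) is used to witness strictness, while the ``$\geq$'' direction is quantified over all teams; this is automatic here because Corollary~\ref{set 1 cor 2} is stated for all $R \in \mathcal{R}_4$.

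\begin{proof}
    From Corollary \ref{set 1 cor 2} and Corollary \ref{set 2 cor 4}, together with the fundamental comparison $\mathcal{F.V.} \geq \mathcal{L.V.}$.
\end{proof}
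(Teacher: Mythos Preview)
Your proposal is correct and matches the paper's own proof essentially verbatim: the paper simply writes ``From Corollary~\ref{set 1 cor 2} and Corollary~\ref{set 2 cor 4},'' leaving the $\mathcal{F.V.} \geq \mathcal{L.V.}$ direction implicit via the fundamental comparisons, which you make explicit. There is no substantive difference in approach.
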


\begin{proof}
    From  Corollary \ref{set 1 cor 2} and Corollary \ref{set 2 cor 4}.
\end{proof}

\begin{theorem}
$\mathcal{FCOM^F_{F.V.}> FCOM^F_{L.V.}}$
\end{theorem}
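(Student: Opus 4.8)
The plan is to establish the two inequalities that, together, yield $\mathcal{FCOM^F_{F.V.}} > \mathcal{FCOM^F_{L.V.}}$. Recall that by definition $\mathcal{A^B_C} > \mathcal{D^E_F}$ means $\mathcal{A^B_C} \geq \mathcal{D^E_F}$ together with the existence of some team $R$ and some problem separating the two. The first ingredient, $\mathcal{FCOM^F_{F.V.}} \geq \mathcal{FCOM^F_{L.V.}}$, is already available for free from the fundamental comparisons in Section~\ref{2}: with the model and synchronicity fixed, $\mathcal{F.V.} \geq \mathcal{L.V.}$, since any algorithm written for limited visibility can be run verbatim by robots with full visibility (they simply ignore the robots outside the disc of radius $V_r$). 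So the substantive content is the strictness.

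For strictness I would invoke the $AE$ problem as the separating witness, exactly as in the preceding theorems for $\mathcal{OBLOT}$ and $\mathcal{FSTA}$. Concretely: by Corollary~\ref{set 1 cor 5} we have $\forall R \in \mathcal{R}_4$, $AE \in \mathcal{FCOM^F_{F.V.}}$, while by Corollary~\ref{set 2 cor 3} there exists $R \in \mathcal{R}_4$ with $AE \notin \mathcal{FCOM^F_{L.V.}}$. Hence $Task(\mathcal{FCOM}, FSYNCH, \mathcal{F.V.}; R) \setminus Task(\mathcal{FCOM}, FSYNCH, \mathcal{L.V.}; R) \neq \emptyset$ for that team, which is precisely the extra condition needed on top of the $\geq$ relation. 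Combining the two gives the claimed strict inequality.

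So the proof is essentially a two-line citation: the $\geq$ direction from item~3 of the Fundamental Comparisons, and the $\neq$ witness from Corollaries~\ref{set 1 cor 5} and~\ref{set 2 cor 3}. There is no real obstacle here, since all the hard work — constructing $AE$, giving the full-visibility algorithm, and proving the limited-visibility impossibility via the adversarial choice of $\epsilon$ with $V_r = BC + \epsilon$ — has already been done in Lemma~\ref{new lemma 1} and Lemma~\ref{new lemma 2} and their corollaries. If anything, the only point requiring a moment's care is making sure the \emph{same} team $R \in \mathcal{R}_4$ is used to witness both the membership (which holds for all teams of four) and the non-membership (which holds for some team of four), but since the positive statement is universally quantified over $\mathcal{R}_4$, any team realizing the impossibility instance automatically works.

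$$\textbf{Proof.}\quad \text{From Corollary~\ref{set 1 cor 5} and Corollary~\ref{set 2 cor 3}.} \qquad \blacksquare$$
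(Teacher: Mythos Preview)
Your proof is correct and matches the paper's own proof exactly: the paper also cites Corollary~\ref{set 1 cor 5} and Corollary~\ref{set 2 cor 3}. Your additional remarks on the trivial $\geq$ direction and on the quantifier alignment are accurate and merely make explicit what the paper leaves implicit.
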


\begin{proof}
    From  Corollary \ref{set 1 cor 5} and Corollary \ref{set 2 cor 3}.
\end{proof}

\begin{theorem}
$\mathcal{FCOM^S_{F.V.}> FCOM^S_{L.V.}}$
\end{theorem}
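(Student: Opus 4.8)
The plan is to establish the strict inequality by checking the two conditions that the definition of $>$ in the preliminaries demands: first that $\mathcal{FCOM^S_{F.V.}} \geq \mathcal{FCOM^S_{L.V.}}$, and second that there is a team $R$ and a problem lying in $Task(\mathcal{FCOM}, SSYNCH, \mathcal{F.V.}; R)$ but not in $Task(\mathcal{FCOM}, SSYNCH, \mathcal{L.V.}; R)$.

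For the first condition I would simply invoke the fundamental comparison $\mathcal{F.V.} \geq \mathcal{L.V.}$ that holds whenever the model and the synchronicity are held fixed. Instantiating it with the model $\mathcal{FCOM}$ and the scheduler $SSYNCH$ gives $\mathcal{FCOM^S_{F.V.}} \geq \mathcal{FCOM^S_{L.V.}}$ at once: any protocol designed for limited visibility is still a valid protocol when the robots can additionally see everything, so every problem solvable in $\mathcal{FCOM^S_{L.V.}}$ is solvable in $\mathcal{FCOM^S_{F.V.}}$.

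For the second condition I would use the Angle Equalization problem $AE$ as the separating witness, together with a team $R \in \mathcal{R}_4$. By Corollary \ref{set 1 cor 3}, $AE \in \mathcal{FCOM^S_{F.V.}}$ for every four-robot team, whereas by Corollary \ref{set 2 cor 5} there exists a team $R \in \mathcal{R}_4$ with $AE \notin \mathcal{FCOM^S_{L.V.}}$. Hence $Task(\mathcal{FCOM}, SSYNCH, \mathcal{F.V.}; R) \setminus Task(\mathcal{FCOM}, SSYNCH, \mathcal{L.V.}; R) \neq \emptyset$, and combined with the first condition this yields $\mathcal{FCOM^S_{F.V.}} > \mathcal{FCOM^S_{L.V.}}$.

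All the real difficulty is absorbed into Corollary \ref{set 2 cor 5}, which itself rests on the impossibility argument of Lemma \ref{new lemma 2}: the adversary sets $V_r = BC + \epsilon$ so that $r_1$ can never see both $C$ and $D$ from the initial configuration, the robots $r_2$ and $r_3$ are required to stay put, any angle-preserving motion of $r_1$ (respectively $r_4$) is confined to the segment $AB$ (respectively $CD$) and cannot pass $B$ (respectively $C$) without a collision, and the angle magnitudes $\theta_1,\theta_2$ cannot be recorded using finitely many colors. Given that this work is already in hand, the theorem itself is a one-line composition of the two cited corollaries, so the only genuinely hard part is the impossibility lemma that we are permitted to assume.
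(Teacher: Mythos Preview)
Your proposal is correct and mirrors the paper's own proof exactly: the paper simply cites Corollary \ref{set 1 cor 3} and Corollary \ref{set 2 cor 5} (the $\geq$ direction being covered by the fundamental comparison $\mathcal{F.V.} \geq \mathcal{L.V.}$ under fixed model and scheduler), so your argument is a faithful expansion of that one-line proof.
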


\begin{proof}
    From  Corollary \ref{set 1 cor 3} and Corollary \ref{set 2 cor 5}.
\end{proof}

\begin{theorem}
$\mathcal{LUMI^F_{F.V.}> LUMI^F_{L.V.}}$
\end{theorem}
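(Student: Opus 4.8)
The plan is to establish $\mathcal{LUMI^F_{F.V.}> LUMI^F_{L.V.}}$ exactly as the preceding theorems were established, namely by exhibiting a problem solvable in the stronger model but not in the weaker one, and then invoking the trivial direction $\mathcal{F.V.} \geq \mathcal{L.V.}$ to get the $\geq$ part. Concretely, I would take the problem $AE$ as the separating witness: Corollary \ref{new Cor 7} already gives $\forall R \in \mathcal{R}_4$, $AE \in \mathcal{LUMI_{F.V}^F}$, and Lemma \ref{new lemma 2} already gives $\exists R \in \mathcal{R}_4$, $AE \notin \mathcal{LUMI_{L.V.}^F}$. Together with the fundamental comparison $\mathcal{F.V.} \geq \mathcal{L.V.}$ (model and synchronicity fixed), which supplies $\mathcal{LUMI^F_{F.V.}} \geq \mathcal{LUMI^F_{L.V.}}$ for \emph{all} teams, these two facts yield that $AE$ witnesses strict separation, hence $\mathcal{LUMI^F_{F.V.}} > \mathcal{LUMI^F_{L.V.}}$ by the definition of $>$ given in the ``Some Important Definitions'' subsection.

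So the proof itself is a one-line citation: ``From Corollary \ref{new Cor 7} and Lemma \ref{new lemma 2}.'' The only content-bearing work has already been done in the two results being cited, so there is no real obstacle at this point; the structure is completely parallel to the six theorems immediately above, each of which is proved by pairing an inclusion corollary with an exclusion corollary.

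If I were asked to justify the two cited ingredients from scratch rather than assume them, the membership direction would be the easy part — under full visibility every robot can identify its own location in the global picture, compute the target position $A'$ with $\angle A'BE = \theta_2$, and $r_1$ performs the single required move while the others stay put, with $\mathcal{LUMI}$'s lights being superfluous here. The genuinely delicate part, which lives in Lemma \ref{new lemma 2} and which I would not re-derive, is the impossibility argument: one must argue that with $V_r = BC + \epsilon$ the mover ($r_1$ or $r_4$) can never simultaneously see the reference point it needs ($C$, resp.\ $B$) and the point determining the target angle, that any angle-preserving motion is forced along the segment $AB$ (resp.\ $CD$) and cannot cross $B$ (resp.\ $C$) by the no-collision requirement, and crucially that the finitely many colors of $\mathcal{LUMI}$ cannot encode the real-valued angles $\theta_1,\theta_2$ that $r_2,r_3$ can measure — this last point is what makes the impossibility survive even the full power of persistent memory plus communication, and it is the subtle step. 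Since the excerpt permits me to assume everything stated earlier, I simply invoke it.

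\begin{proof}
    From Corollary \ref{new Cor 7} and Lemma \ref{new lemma 2}.
\end{proof}
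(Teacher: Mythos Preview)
Your proposal is correct and matches the paper's own proof exactly: the paper also simply cites Corollary~\ref{new Cor 7} and Lemma~\ref{new lemma 2}. Your additional commentary on why the trivial direction $\mathcal{F.V.} \geq \mathcal{L.V.}$ is needed and on the content of Lemma~\ref{new lemma 2} is accurate but goes beyond what the paper itself spells out at this point.
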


\begin{proof}
    From  Corollary \ref{new Cor 7} and Lemma \ref{new lemma 2}.
\end{proof}

\begin{theorem}
$\mathcal{LUMI^S_{F.V.}> LUMI^S_{L.V.}}$
\end{theorem}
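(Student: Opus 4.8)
The plan is to establish the strict inequality $\mathcal{LUMI^S_{F.V.}} > \mathcal{LUMI^S_{L.V.}}$ by splitting it into the two conditions that the definition of ``computationally more powerful'' requires: first that $\mathcal{LUMI^S_{F.V.}} \geq \mathcal{LUMI^S_{L.V.}}$, and second that there is a team of robots and a problem solvable in the former but not the latter. For the first part I would simply invoke the third item of the Fundamental Comparisons, namely $\mathcal{F.V.} \geq \mathcal{L.V.}$ when the model and synchronicity are fixed; instantiating this with $\mathcal{M} = \mathcal{LUMI}$ and scheduler $SSYNCH$ gives $Task(\mathcal{LUMI}, SSYNCH, \mathcal{F.V.}; R) \supseteq Task(\mathcal{LUMI}, SSYNCH, \mathcal{L.V.}; R)$ for every $R \in \mathcal{R}$, which is exactly the ``$\geq$'' direction.

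For the separating witness I would reuse the Angle Equalization problem $AE$ on a team of four robots. By Corollary~\ref{set 1 cor 6} we have $AE \in \mathcal{LUMI^S_{F.V.}}$ for every $R \in \mathcal{R}_4$, so in particular $AE \in Task(\mathcal{LUMI}, SSYNCH, \mathcal{F.V.}; R)$ for such an $R$. On the other hand, Corollary~\ref{set 2 cor 1} gives $\exists R \in \mathcal{R}_4$ with $AE \notin \mathcal{LUMI^S_{L.V.}}$, i.e. $AE \notin Task(\mathcal{LUMI}, SSYNCH, \mathcal{L.V.}; R)$ for that same team (the adversary's choice of visibility radius $V_r = BC + \epsilon$ is what forces this). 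Hence $Task(\mathcal{LUMI}, SSYNCH, \mathcal{F.V.}; R) \setminus Task(\mathcal{LUMI}, SSYNCH, \mathcal{L.V.}; R) \neq \emptyset$, which together with the ``$\geq$'' direction yields the strict separation.

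There is no real obstacle here: both ingredients have already been proved earlier in the paper, so the argument is a two-line citation in the same style as the preceding theorems (e.g. the proof of $\mathcal{LUMI^F_{F.V.}} > \mathcal{LUMI^F_{L.V.}}$, which cites Corollary~\ref{new Cor 7} and Lemma~\ref{new lemma 2}). The only thing to be careful about is matching the schedulers consistently — the upper witness must come from the semi-synchronous full-visibility result (Corollary~\ref{set 1 cor 6}, not the fully-synchronous Corollary~\ref{new Cor 7}) and the impossibility from the semi-synchronous limited-visibility corollary (Corollary~\ref{set 2 cor 1}, which itself descends from Lemma~\ref{new lemma 2} since $\mathcal{F} \geq \mathcal{S}$). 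Thus the proof reads: from Corollary~\ref{set 1 cor 6} and Corollary~\ref{set 2 cor 1}.
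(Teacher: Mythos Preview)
Your proposal is correct and matches the paper's own proof exactly: the paper proves this theorem by citing Corollary~\ref{set 1 cor 6} and Corollary~\ref{set 2 cor 1}, with the $\geq$ direction implicit from the fundamental comparison $\mathcal{F.V.} \geq \mathcal{L.V.}$. Your additional remarks about scheduler matching and the provenance of the corollaries are accurate and add clarity without departing from the paper's approach.
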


\begin{proof}
    From  Corollary \ref{set 1 cor 6} and Corollary \ref{set 2 cor 1}.
\end{proof}

\section{Equivalent Oscillation Problem}\label{6}

\begin{definition}
    \textbf{Problem Equivalent Oscillation (EqOsc)}: Let three robots $r_1$, $r_2$ and $r_3$ be initially placed at three points $B$, $A$ and $C$ respectively. $AB=AC= d$. Let $B'$, $C'$ be points collinear on the line such that $AB' = AC' = \frac{2d}{3}$. The robots $r_1$ and $r_3$ have to change their positions from $B$ to $B'$ and back to $B$, $C$ to $C'$ and back to $C$ respectively, while always being equidistant from $r_2$, i.e., $A$ (Equidistant Condition). 
    
    More formally speaking, if there is a round $t$ such that the robots $r_1$ and $r_3$ is at $B$ and $C$ respectively, then there must must exist a round $t' > t$, such that $r_1$ and $r_3$ is at $B'$ and $C'$ respectively. Similarly if at round $t'$, $r_1$ and $r_3$ is at $B'$ and $C'$ respectively, then there must exist a round $t''>t'$, such that $r_1$ and $r_3$ is at $B$ and $C$ respectively (Oscillation Condition). 
\end{definition}

\begin{figure}[ht!]
\centering
\includegraphics[width=7cm]{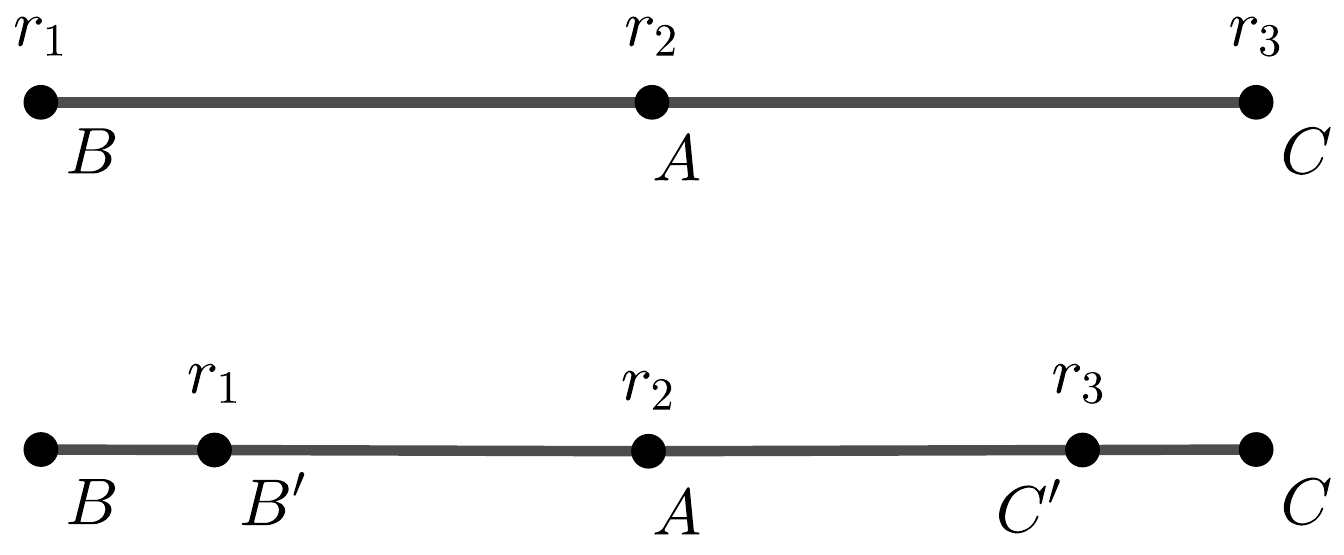}
\caption{Illustration of $EqOsc$ problem}
\label{fig:Iso1}
\end{figure}

We prove that this problem is not solvable in $\mathcal{LUMI_{F.V.}^S}$.


\begin{lemma}\label{lemma 7}
$\exists$  $R\in \mathcal{R}_3$, $EqOsc \not \in \mathcal{LUMI_{F.V.}^S}$.
\end{lemma}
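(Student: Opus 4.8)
The plan is to derive a contradiction from the assumption that some algorithm $\mathcal{A}$ solves $EqOsc$ in $\mathcal{LUMI_{F.V.}^S}$ for every team $R \in \mathcal{R}_3$, by constructing an adversarial semi-synchronous schedule that breaks the Equidistant Condition. The core difficulty is that the three robots $r_1$, $r_2$, $r_3$ are placed symmetrically: $r_2$ sits at the apex $A$ with $AB = AC = d$, and $r_1$, $r_3$ are mirror images of each other across the perpendicular bisector of $BC$ through $A$. I would begin by fixing the initial local coordinate systems of $r_1$ and $r_3$ so that they are exact reflections of one another (the absence of common chirality makes this legitimate), and noting that in the initial configuration $r_1$ and $r_3$ see the same colored snapshot up to this reflection, hence in $\mathcal{LUMI}$ they compute mirror-symmetric destinations and set the same light.

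Next I would let the adversary activate $r_1$ and $r_3$ simultaneously at every round while activating $r_2$ only when convenient — in fact one can keep $r_2$ perpetually unactivated for a long stretch, since $SSYNCH$ only requires fairness eventually. By the symmetry just established and induction on rounds, as long as $r_2$ has not moved, the configuration stays symmetric about the line through $A$ perpendicular to $BC$, so $r_1$ and $r_3$ remain at mirror-image positions and carry equal lights; in particular, whenever $r_1$ reaches $B'$, $r_3$ simultaneously reaches $C'$, and the Oscillation Condition forces $\mathcal{A}$ to actually produce such a round (otherwise the problem is unsolved on this very execution). The key step is then to break the schedule's symmetry precisely at such a transition: at the round where $r_1$ and $r_3$ are poised to move from (say) $B'$, $C'$ toward $B$, $C$, the adversary activates only $r_1$ and leaves $r_3$ asleep. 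Then $r_1$ advances along its arc/segment while $r_3$ stays put, so after this round $r_1$ is at a point with $r_1 r_2$-distance strictly different from $r_3 r_2$-distance (since $r_1$ moved a positive amount toward a position whose distance to $A$ differs — here I would use that $B$, $B'$ are at distances $d$, $\tfrac{2d}{3}$ from $A$, so any nonzero progress changes the distance), violating the Equidistant Condition, which must hold at \emph{every} round, not just the distinguished ones.

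There is one gap to close: the adversary must ensure that $r_1$ does make nonzero progress when activated alone, i.e. that $\mathcal{A}$ does not simply stall $r_1$ forever. But if $\mathcal{A}$ were to keep $r_1$ stationary whenever $r_3$ is inactive, the adversary could alternate single activations of $r_1$ and of $r_3$ and never let them both move, so neither would ever leave $B$, $C$ — contradicting the Oscillation Condition on that execution. Hence on some execution $r_1$ must, at some activated round, move a positive distance while in a configuration where $r_2$ is at $A$ and $r_3$ is fixed; freezing $r_3$ at that round gives the distance mismatch. I would also remark that $r_2$ cannot repair the situation: $r_2$ would need to relocate to the new circumcenter-type point equidistant from the displaced $r_1$ and the fixed $r_3$, but by the time $r_2$ is activated the adversary has already exhibited a round with unequal distances, and moreover a further symmetric/asymmetric alternation prevents $r_2$ from ever catching up. Assembling these observations yields a team $R \in \mathcal{R}_3$ and an $SSYNCH$ schedule on which $\mathcal{A}$ fails, establishing $EqOsc \notin \mathcal{LUMI_{F.V.}^S}$. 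The main obstacle, as flagged, is making the "nonzero progress" argument fully rigorous without assuming rigidity of moves — I expect the paper handles this via the Oscillation Condition exactly as sketched.
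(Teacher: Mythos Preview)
Your argument is correct, but the paper's proof is much more direct and is essentially your ``gap to close'' paragraph standing alone. The paper simply has the adversary, from the start, activate $r_1$ alone at rounds $t+1, t+3, t+5, \ldots$ and $r_3$ alone at rounds $t+2, t+4, t+6, \ldots$; then any movement by either terminal robot immediately violates the Equidistant Condition (since the other one is asleep and stays put), while perpetual immobility of both violates the Oscillation Condition. That is the entire proof.

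Your symmetric-activation-then-break construction works, but the surrounding machinery is unnecessary, and the ``gap'' you flag is not a real gap: in $SSYNCH$ a robot's action in a round is a function of its snapshot alone, and the snapshot at your deviant round is identical to the one under the symmetric schedule (the configurations agree up to that round), so $r_1$ \emph{must} move exactly as it would have under simultaneous activation. There is no way for $\mathcal{A}$ to ``keep $r_1$ stationary whenever $r_3$ is inactive,'' because $r_1$ cannot observe whether $r_3$ is active. Likewise the discussion of $r_2$ repairing the asymmetry is moot: the Equidistant Condition is already broken at the end of the deviant round, regardless of what $r_2$ does afterward. In short, you rediscovered the paper's argument as a fallback; the preliminary symmetry scaffolding and the rigidity worry can all be dropped.
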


\begin{proof}
    Let the the robots $r_1$, $r_3$ be able to successfully execute the movements satisfying the conditions of the problem till round $t$. Let at the beginning of round $t+1$ the robots $r_1$ and $r_3$ be at the points $B$ and $C$ respectively. So next the robots $r_1$ and $r_3$ must move to the points $B'$, $C'$ respectively. Note that the robots must move together or otherwise the equidistant condition is not satisfied. From round $t+1$ we activate only one of the terminal robots alternatively. Let at rounds $t+1$, $t+3$, $t+5$,......., the robot $r_1$ is activated and let at rounds $t+2$, $t+4$, $t+6$,......, the robot $r_3$ is activated.

Now whenever $r_1$ or $r_3$ makes a movement(when they are activated) the equidistant condition is violated. If neither $r_1$ nor $r_3$ makes any movement indefinitely then the oscillating condition is violated.

The problem cannot be solved in $\mathcal{LUMI_{F.V.}^S}$.

\end{proof}

From Lemma \ref{lemma 7} following result naturally follows,

\begin{corollary}\label{corollary 5}
$\exists$  $R\in \mathcal{R}_3$, $EqOsc \not \in \mathcal{FSTA_{F.V.}^S}$ and, $\exists$  $R\in \mathcal{R}_3$,  $EqOsc \not \in \mathcal{FCOM_{F.V.}^S}$.
\end{corollary}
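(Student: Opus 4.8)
The plan is to derive the corollary directly from Lemma~\ref{lemma 7} together with the fundamental comparison $\mathcal{LUMI} \geq \mathcal{FSTA}$ and $\mathcal{LUMI} \geq \mathcal{FCOM}$ (the trivial inclusions listed among the fundamental comparisons, with visibility and synchronicity held fixed). First I would fix the team $R \in \mathcal{R}_3$ whose existence is asserted by Lemma~\ref{lemma 7}, i.e.\ the three-robot team placed at $B$, $A$, $C$ with $AB=AC=d$ as in the definition of $EqOsc$, equipped with the same capabilities used in the proof of that lemma; for this $R$ we have $EqOsc \notin \mathcal{LUMI_{F.V.}^S}$, that is, $EqOsc \notin Task(\mathcal{LUMI}, SSYNCH, \mathcal{F.V.}; R)$.

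Next I would invoke the set inclusion $Task(\mathcal{FSTA}, SSYNCH, \mathcal{F.V.}; R) \subseteq Task(\mathcal{LUMI}, SSYNCH, \mathcal{F.V.}; R)$, which holds for every fixed team $R$ because any $\mathcal{FSTA}$ protocol is a special case of a $\mathcal{LUMI}$ protocol (a robot that only reads the colour of its own light is a robot that simply ignores the colours reported for the others). Since $EqOsc$ does not lie in the larger set, it cannot lie in the smaller one, so $EqOsc \notin \mathcal{FSTA_{F.V.}^S}$ for this $R$. The identical argument with $\mathcal{FCOM}$ in place of $\mathcal{FSTA}$, using $Task(\mathcal{FCOM}, SSYNCH, \mathcal{F.V.}; R) \subseteq Task(\mathcal{LUMI}, SSYNCH, \mathcal{F.V.}; R)$, gives $EqOsc \notin \mathcal{FCOM_{F.V.}^S}$ for the same team, which establishes both existential statements.

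I expect no real obstacle: all of the difficulty resides in Lemma~\ref{lemma 7}, and the corollary is a one-line monotonicity consequence of it. The only point meriting a word of care is that a single witness team $R$ must serve for both conjuncts; this is automatic, since Lemma~\ref{lemma 7} exhibits its impossibility on a concrete $R \in \mathcal{R}_3$ and the two model inclusions above are valid for that very $R$, so the same team simultaneously witnesses $EqOsc \notin \mathcal{FSTA_{F.V.}^S}$ and $EqOsc \notin \mathcal{FCOM_{F.V.}^S}$.
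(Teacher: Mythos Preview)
Your proposal is correct and matches the paper's approach exactly: the paper simply notes that the corollary ``naturally follows'' from Lemma~\ref{lemma 7}, which is precisely the monotonicity argument via $\mathcal{LUMI} \geq \mathcal{FSTA}$ and $\mathcal{LUMI} \geq \mathcal{FCOM}$ that you spell out. Your care about a single witness team is even slightly more than needed, since the statement has two separate existential quantifiers, but your observation that one team works for both is correct and harmless.
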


\subsection{Solution of problem Equivalent Oscillation in $\mathcal{FSTA^{F}_{L.V.}}$}

We now give an algorithm to solve the problem in $\mathcal{FSTA_{L.V.}^F}$. The pseudocode of the algorithm is given below.

\begin{algorithm}[H]
\scriptsize
$d$ = distance from the closer robot\;
$A$ = Position of the middle robot\;

\eIf{not a terminal robot}
{Remain static}
{\eIf{Light = Off or Light = F}
     {Light $\leftarrow$ N
     
     Move to a point $D$ on the line segment such that $AD = \frac{2d}{3}$}
     {
        Light $\leftarrow$ F

         Move to a point $D$ on the line segment such that $AD = \frac{3d}{2}$

     }   
 }
 \caption{Algorithm $AlgEOSTA$ for Problem EqOsc executed by each robot $r$ in $\mathcal{FSTA_{L.V.}^F}$ }
\end{algorithm} 


     



\paragraph{Description and Correctness of $AlgEOSTA$}
Let the three robots $r_1$, $r_2$ and $r_3$ be at $B$, $A$, and $C$ respectively. Here $V_r> AB$, and $V_r>AC$, but $V_r< BC$. So there are three robots among which there is one robot which can see two other robots except itself, we call this robot the $middle$ $robot$. The other two robots can see only one other robot except itself. We call each of these two robots $terminal$ $robot$. The terminal robots are initially at a distance $D$ from the middle robot. Whenever a robot is activated it can understand whether it is a terminal or a middle robot. Now, each of the robots save a light which is initially saved to $Off$. If a robot perceives that it is a middle robot it does not do anything. If it is a terminal robot and its light is set to $Off$ or $F$, it changes its light to $N$ and moves closer a distance  two-third of the present distance from the middle robot, and if its light is set to $N$, it changes its light to $F$ and moves further to a distance $1.5$ times of the present distance from the middle robot. As we consider a fully synchronous system both the terminal robots execute the nearer and further movement alternatively together, and hence our problem is solved.

Hence we get the following result:

\begin{lemma}\label{lemma 8}
$\forall$  $R\in \mathcal{R}_3$, $EqOsc \in \mathcal{FSTA_{L.V.}^F}$.
\end{lemma}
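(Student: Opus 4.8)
The plan is to show that Algorithm $AlgEOSTA$ is correct on every $R\in\mathcal{R}_3$, by exhibiting a round-indexed configuration invariant that one fully synchronous round preserves. Throughout I would write $d=AB=AC$ and carry along the standing assumptions from the description of $AlgEOSTA$: the initial positions $B,A,C$ are collinear with $A$ strictly between $B$ and $C$, and $V_r>d$, so that the ``middle'' robot $r_2$ sees both other robots while each ``terminal'' robot $r_1,r_3$ sees $r_2$; I also take movements to be rigid (destinations reached exactly), as is standard for exact-pattern tasks. One point I would make explicit at the outset is the upper bound on $V_r$: under the algorithm the two terminal robots come as close as $\tfrac{4}{3}d$ to one another, so it is enough that $V_r<\tfrac{4}{3}d$ (the excerpt's $V_r<BC=2d$ should be read with this sharpening in mind), or else one replaces the crude ``count the visible robots'' role test by the collinearity-based test used below, which is valid for every $V_r>d$.

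Next I would state the invariant $\mathcal{I}_k$, for $k\ge 1$: at the beginning of round $k$, (a) $r_1,r_2,r_3$ are collinear with $r_2$ at $A$ strictly between the other two; (b) $r_1$ lies on the half-line from $A$ through $B$ and $r_3$ on the half-line from $A$ through $C$, both at the common distance $\delta_k$ from $A$, where $\delta_k=d$ when $k$ is odd and $\delta_k=\tfrac{2}{3}d$ when $k$ is even; (c) the internal lights of $r_1$ and $r_3$ are equal, being $\mathtt{Off}$ if $k=1$, $\mathtt{N}$ if $k$ is even, and $\mathtt{F}$ if $k$ is odd and $k\ge 3$. The base case $\mathcal{I}_1$ is the initial configuration. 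For the inductive step, assume $\mathcal{I}_k$; every robot is activated (FSYNCH). The robot $r_2$ sees two robots that are collinear with it and on opposite sides of it (equivalently, it is the unique robot lying strictly between its visible neighbours), so it recognises it is not a terminal robot and stays put. Each of $r_1,r_3$ sees $r_2$ (and possibly the other terminal, but then on the same side, with $r_2$ the closer of the two), recognises it is a terminal robot, reads $d:=\delta_k$ as the distance to the closer (or unique) visible robot and $A$ as its position, and according to its light either sets its light to $\mathtt{N}$ and moves to the point of its own half-line at distance $\tfrac{2}{3}\delta_k$ from $A$ (when the light was $\mathtt{Off}$ or $\mathtt{F}$), or sets its light to $\mathtt{F}$ and moves to the point at distance $\tfrac{3}{2}\delta_k$ from $A$ (when the light was $\mathtt{N}$). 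A short case check on the pair (value of $\delta_k$, value of the light) shows the new common distance and new common light are precisely those prescribed by $\mathcal{I}_{k+1}$, and that $r_1,r_3$ move symmetrically along the two opposite half-lines, so collinearity and the ``opposite sides'' property persist; hence $\mathcal{I}_{k+1}$ holds.

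From the invariant, correctness is immediate. By (b), at the start of round $k$ the pair $(r_1,r_3)$ occupies $(B,C)$ for odd $k$ and $(B',C')$ for even $k$, since the point of the half-line from $A$ through $B$ at distance $d$ (resp. $\tfrac{2}{3}d$) is by definition $B$ (resp. $B'$), and symmetrically for $C,C'$. Hence from every round with the robots at $(B,C)$ the next round has them at $(B',C')$ and conversely, which is the Oscillation Condition, and at every round $r_1$ and $r_3$ are equidistant from $r_2$, which is the Equidistant Condition. No collision ever occurs, since $r_1,r_3$ stay at distance $\ge\tfrac{2}{3}d>0$ from $r_2=A$ and on opposite half-lines, so the three positions remain pairwise distinct.

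The induction itself is routine; the step that needs genuine care is the role identification under limited visibility --- a terminal robot must, from an \emph{uncoloured} snapshot of at most two points (we are in $\mathcal{FSTA}$), reliably tell itself apart from the middle robot in \emph{every} configuration the algorithm can reach, in particular the inner ones where the two terminals may be mutually visible, and making that rigorous is exactly what forces one to pin down the quantitative relation between $V_r$ and $d$. The other point I would stress is where full synchrony is used: only in the inductive step, to guarantee $r_1$ and $r_3$ perform their inward/outward move in the \emph{same} round. A semi-synchronous adversary breaks this lockstep --- which is precisely why $EqOsc\notin\mathcal{LUMI_{F.V.}^S}$ (Lemma~\ref{lemma 7}) --- and that contrast is the whole point of the example.
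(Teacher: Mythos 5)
Your proposal is correct and follows essentially the same route as the paper: it is the paper's algorithm $AlgEOSTA$, with the informal ``the two terminal robots alternate together under FSYNCH'' argument made precise as a round-indexed invariant on positions, distances and lights. The one place where you go beyond the paper is worth keeping: the paper's role test (``the middle robot is the one that sees two others'') is only justified for the initial spacing $V_r<BC=2d$, and indeed breaks in the inner configuration if $V_r\in[\tfrac{4}{3}d,2d)$, where each terminal also sees the other terminal; your sharpened hypothesis $V_r<\tfrac{4}{3}d$, or better the betweenness-based role test (the middle robot is the unique one lying strictly between its visible neighbours), closes this small gap that the paper's description leaves open.
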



\begin{theorem}\label{theorem 9}
    $\mathcal{FSTA_{L.V.}^F} > \mathcal{FSTA_{L.V.}^S}$
\end{theorem}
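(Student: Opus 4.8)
The plan is to establish the two ingredients required by the definition of the strict relation $>$: first that $\mathcal{FSTA_{L.V.}^F} \geq \mathcal{FSTA_{L.V.}^S}$ (every problem solvable in the weaker setting is solvable in the stronger), and second that the containment is strict, via a witness problem separating the two. For the inclusion, I would invoke the fundamental comparison already recorded in the excerpt, namely $\mathcal{FSYNCH} \geq \mathcal{SSYNCH}$ when the model and visibility are held fixed (item 2 of the Fundamental Comparisons), so that $\mathcal{FSTA_{L.V.}^F} \geq \mathcal{FSTA_{L.V.}^S}$ is immediate: any algorithm correct against a semi-synchronous adversary is in particular correct against the fully synchronous scheduler, which is one specific (non-adversarial) activation pattern.

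For strictness I would use the problem $EqOsc$ as the separating witness. The two facts needed are exactly Lemma \ref{lemma 8} and Corollary \ref{corollary 5}: Lemma \ref{lemma 8} gives $EqOsc \in \mathcal{FSTA_{L.V.}^F}$ for all $R \in \mathcal{R}_3$ (via the algorithm $AlgEOSTA$), while Corollary \ref{corollary 5} gives $\exists R \in \mathcal{R}_3$ with $EqOsc \notin \mathcal{FSTA_{F.V.}^S}$. To connect the latter to the limited-visibility semi-synchronous model, I would note that $\mathcal{FSTA_{F.V.}^S} \geq \mathcal{FSTA_{L.V.}^S}$ (item 3 of the Fundamental Comparisons, full visibility dominates limited visibility with model and synchronicity fixed), hence $EqOsc \notin \mathcal{FSTA_{F.V.}^S}$ forces $EqOsc \notin \mathcal{FSTA_{L.V.}^S}$ for that same team $R$. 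Therefore $Task(\mathcal{FSTA}, FSYNCH, \mathcal{L.V.}; R) \setminus Task(\mathcal{FSTA}, SSYNCH, \mathcal{L.V.}; R) \neq \emptyset$, which together with the inclusion above yields $\mathcal{FSTA_{L.V.}^F} > \mathcal{FSTA_{L.V.}^S}$.

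The only genuine content of the argument sits in the two lemmas being cited, and those are already proved in the excerpt; the theorem itself is a bookkeeping assembly. The one place to be careful is the direction of the impossibility transfer: the impossibility in Corollary \ref{corollary 5} is stated for \emph{full} visibility, and it is crucial that impossibility propagates \emph{downward} along the visibility order (a problem unsolvable even with full visibility is surely unsolvable with less), which is the correct direction and is exactly what item 3 of the Fundamental Comparisons provides when read contrapositively. I would also make explicit that the witness team $R$ delivered by Corollary \ref{corollary 5} lies in $\mathcal{R}_3$, so that Lemma \ref{lemma 8}, being stated for all $R \in \mathcal{R}_3$, applies to that very same team — ensuring the set difference is realized by a single common team, as the definition of $>$ requires.
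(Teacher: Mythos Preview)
Your proof is correct and follows essentially the same approach as the paper's: the paper also invokes Corollary~\ref{corollary 5} for the impossibility side and the trivial scheduler comparison for the inclusion, with Lemma~\ref{lemma 8} (stated immediately before the theorem) supplying the positive side. You are simply more explicit than the paper about two points it leaves implicit---the citation of Lemma~\ref{lemma 8} for solvability in $\mathcal{FSTA_{L.V.}^F}$, and the contrapositive transfer of the impossibility from $\mathcal{FSTA_{F.V.}^S}$ down to $\mathcal{FSTA_{L.V.}^S}$ via the visibility comparison---both of which are welcome clarifications.
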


\begin{proof}
    By Corollary \ref{corollary 5} the problem cannot be solved in $\mathcal{FSTA_{L.V.}^S}$, and, trivially  $\mathcal{FSTA_{L.V.}^F} \geq \mathcal{FSTA_{L.V.}^S}$.

\end{proof}

\begin{theorem}\label{theorem 10}
$\mathcal{FSTA_{L.V.}^F} \perp \mathcal{FSTA_{F.V.}^S}$
\end{theorem}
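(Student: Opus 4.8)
To establish $\mathcal{FSTA_{L.V.}^F} \perp \mathcal{FSTA_{F.V.}^S}$ I need two separating problems, one in each direction. The first direction is already essentially in hand: by Lemma~\ref{lemma 8} we have $EqOsc \in \mathcal{FSTA_{L.V.}^F}$, while by Corollary~\ref{corollary 5} we have $EqOsc \not\in \mathcal{FSTA_{F.V.}^S}$. Hence for the team $R \in \mathcal{R}_3$ witnessing those two facts, $Task(\mathcal{FSTA}, \mathcal{F}, \mathcal{L.V.}; R) \setminus Task(\mathcal{FSTA}, \mathcal{S}, \mathcal{F.V.}; R) \neq \emptyset$. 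So the real work is the reverse inclusion failure: I must exhibit a team $R' \in \mathcal{R}$ and a problem solvable in $\mathcal{FSTA_{F.V.}^S}$ but not in $\mathcal{FSTA_{L.V.}^F}$.

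For that reverse direction the natural candidate is the Angle Equalization problem $AE$. By Corollary~\ref{set 1 cor 2}, $AE \in \mathcal{FSTA_{F.V.}^S}$ for every $R \in \mathcal{R}_4$, and by Corollary~\ref{set 2 cor 2}, there exists $R \in \mathcal{R}_4$ with $AE \not\in \mathcal{FSTA_{L.V.}^F}$. So the plan is simply: take the $4$-robot team from Corollary~\ref{set 2 cor 2} as the witness $R'$; then $Task(\mathcal{FSTA}, \mathcal{S}, \mathcal{F.V.}; R') \setminus Task(\mathcal{FSTA}, \mathcal{F}, \mathcal{L.V.}; R') \neq \emptyset$ because $AE$ lies in the former but not the latter. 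Combining the two directions with the definition of $\perp$ (the ``computationally orthogonal'' clause, instantiated with $R_1 = R$ from the $EqOsc$ argument and $R_2 = R'$ from the $AE$ argument) gives the incomparability.

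The proof is therefore a short assembly of four already-proved facts — Lemma~\ref{lemma 8}, Corollary~\ref{corollary 5}, Corollary~\ref{set 1 cor 2}, and Corollary~\ref{set 2 cor 2} — and I would write it essentially as a two-line citation chain, exactly in the style of the $\mathcal{OBLOT}$/$\mathcal{FCOM}$/$\mathcal{LUMI}$ theorems earlier in the paper. There is no genuine obstacle here: all the hard content (the impossibility of $AE$ under limited visibility, and the impossibility of $EqOsc$ under a semi-synchronous scheduler) has been discharged in the preceding lemmas. The only thing to be mildly careful about is that the two separating witnesses live in different team-size classes ($\mathcal{R}_3$ for $EqOsc$, $\mathcal{R}_4$ for $AE$), but the definition of $\perp$ explicitly permits two different teams $R_1, R_2$, so nothing needs to be reconciled.

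\begin{proof}
    By Lemma~\ref{lemma 8}, $EqOsc \in \mathcal{FSTA_{L.V.}^F}$ for all $R \in \mathcal{R}_3$, while by Corollary~\ref{corollary 5} there is $R_1 \in \mathcal{R}_3$ with $EqOsc \not\in \mathcal{FSTA_{F.V.}^S}$; hence $Task(\mathcal{FSTA}, \mathcal{F}, \mathcal{L.V.}; R_1) \setminus Task(\mathcal{FSTA}, \mathcal{S}, \mathcal{F.V.}; R_1) \neq \emptyset$. Conversely, by Corollary~\ref{set 1 cor 2}, $AE \in \mathcal{FSTA_{F.V.}^S}$ for all $R \in \mathcal{R}_4$, while by Corollary~\ref{set 2 cor 2} there is $R_2 \in \mathcal{R}_4$ with $AE \not\in \mathcal{FSTA_{L.V.}^F}$; hence $Task(\mathcal{FSTA}, \mathcal{S}, \mathcal{F.V.}; R_2) \setminus Task(\mathcal{FSTA}, \mathcal{F}, \mathcal{L.V.}; R_2) \neq \emptyset$. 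By the definition of computational orthogonality, $\mathcal{FSTA_{L.V.}^F} \perp \mathcal{FSTA_{F.V.}^S}$.
\end{proof}
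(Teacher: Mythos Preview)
Your proof is correct and follows essentially the same approach as the paper's own proof: both directions are witnessed by the same two problems ($EqOsc$ via Lemma~\ref{lemma 8} and Corollary~\ref{corollary 5}, and $AE$ via Corollaries~\ref{set 1 cor 2} and~\ref{set 2 cor 2}), assembled in the same way. Your version is slightly more explicit about unpacking the definition of $\perp$ and the distinct witness teams $R_1 \in \mathcal{R}_3$, $R_2 \in \mathcal{R}_4$, but the content is identical.
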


\begin{proof}

By Corollary \ref{corollary 5} and Lemma \ref{lemma 8}, $EqOsc$ cannot be solved in $\mathcal{FSTA_{F.V.}^S}$ but can be solved in $\mathcal{FSTA_{L.V.}^F}$.

    Similarly, by Corollary \ref{set 2 cor 2} and \ref{set 1 cor 2}  $AE$ cannot be solved in $\mathcal{FSTA_{L.V.}^F}$ but can be solved in $\mathcal{FSTA_{F.V.}^S}$. Hence the result.


\end{proof}



\subsection{Solution of problem Equivalent Oscillation in $\mathcal{FCOM^{F}_{L.V.}}$}

We now give an algorithm to solve the problem in $\mathcal{FCOM_{L.V.}^F}$. The pseudocode of the algorithm is given below.

\begin{algorithm}[H]
\scriptsize
$d$ = distance from the closer robot\;
$A$ = Position of the middle robot\;

\eIf{not a terminal robot}
{Remain static

  \If{Visible light = $NIL$ or $FAR$}
       {
       Light $\leftarrow$ $FAR$
       }

      \ElseIf{Visible light = $NEAR$} {

       Light $\leftarrow$ $NEAR$

       }

  }
{\If{Visible light = $NIL$ or Visible light = $NEAR$}
     {Light $\leftarrow$ $NEAR$

     Move to a point $D$ on the line segment such that $AD = \frac{2d}{3}$
     
     }
     
     \ElseIf{Visible light = $FAR$}{
        Light $\leftarrow$ $FAR$
        
        Move to a point $D$ on the line segment such that $AD = \frac{3d}{2}$

     }   
 }
 \caption{Algorithm $AlgEOCOM$ for Problem EqOsc executed by each robot $r$ in $\mathcal{FCOM_{L.V.}^F}$ }
\end{algorithm}








     
     



\paragraph{Description and Correctness of $AlgEOCOM$}
Let the three robots $r_1$, $r_2$ and $r_3$ be at $B$, $A$, and $C$ respectively. Here $V_r> AB$, and $V_r>AC$, but $V_r< BC$. So there are three robots among which there is one robot which can see two other robots except itself, we call this robot the $middle$ $robot$. The other two robots can see only one other robot except itself. We call each of these two robots $terminal$ $robot$.  Whenever a robot is activated it can understand whether it is a terminal or a middle robot. Now, each of the robots has a light which is initially saved to $NIL$. If a robot perceives that it is a middle robot it can see the lights of the two terminal robots. If it perceives the lights of the terminal robots to be set to $NIL$ or $FAR$, it sets its own light to $FAR$. And if it perceives the lights of the terminal robots to be set to $NEAR$, it sets its own light to $NEAR$.  The middle robot only changes its light but does not change its position.
If it is a terminal robot it can only see the light of the middle robot. If the light of the middle robot is set to $NIL$ or $NEAR$, the terminal robot  changes its light to $NEAR$ and moves closer to a distance which is two-third of the present distance from the middle robot. And if the light of the middle robot is set to $FAR$, it changes its light to $FAR$ and moves away to a distance $1.5$ times the present distance from the middle robot. As we consider a fully synchronous system both the terminal robots execute the nearer and further movement alternatively together, and hence our problem is solved.

Hence we get the following result:

\begin{lemma}\label{lemma 9}
$\forall$  $R\in \mathcal{R}_3$, $EqOsc \in \mathcal{FCOM_{L.V.}^F}$.
\end{lemma}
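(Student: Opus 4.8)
The statement to prove is Lemma~\ref{lemma 9}: for all $R \in \mathcal{R}_3$, $EqOsc \in \mathcal{FCOM_{L.V.}^F}$. The plan is to verify that the algorithm $AlgEOCOM$ already displayed in the excerpt correctly solves $EqOsc$ under a fully synchronous scheduler with limited visibility, essentially by formalizing the ``Description and Correctness'' paragraph into an inductive argument over rounds.

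First I would set up the geometry and the visibility bookkeeping: with three robots at $B$, $A$, $C$ on a line with $AB = AC = d < BC$ and $V_r$ chosen so that $V_r > AB, AC$ but $V_r < BC$, exactly one robot (the one at $A$) sees two others and is the \emph{middle} robot, while $r_1$ and $r_3$ each see only the middle robot and are \emph{terminal}. I would note that this role assignment is \emph{invariant} under the algorithm: the terminal robots only ever move along the line segment joining them to the middle robot, scaling their distance by $2/3$ or $3/2$, so the distances stay in $\{d, 2d/3\}$, which keeps each terminal robot strictly inside $V_r$ of the middle robot and keeps $BC$-type distances above $V_r$ (one should check $\tfrac{2d}{3} + d$ and $2\cdot\tfrac{2d}{3}$ remain $> V_r$ given $d + d > V_r$; since $V_r < 2d$ one needs $V_r \le \tfrac{4d}{3}$ or a remark that the adversary fixed $V_r$ once and for all strictly between $d$ and the relevant value — I would simply state the visibility graph is the path $r_1 - r_2 - r_3$ throughout, as the algorithm never changes which robot is middle). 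Because the scheduler is $FSYNCH$, in every round all three robots act simultaneously, so there is no interleaving to worry about.

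Next I would carry out the induction on rounds tracking the pair (position-state, light-state). The key invariant: at the start of any round either (i) both terminals are at $B, C$ with the middle robot's light in $\{NIL, NEAR\}$, or (ii) both terminals are at $B', C'$ with the middle robot's light equal to $FAR$. In case (i), each terminal sees the middle light $\in \{NIL, NEAR\}$, so by the algorithm it sets its light to $NEAR$ and moves to distance $\tfrac{2d}{3}$, i.e.\ to $B'$ resp.\ $C'$; simultaneously the middle robot sees both terminal lights — at this moment they are still the \emph{old} values (both $NEAR$, or $NIL$ at the very start) since $Look$ precedes $Move$ and light updates, so the middle robot sets its light to $NEAR$... here I need to be careful: the middle robot must end up with light $NEAR$ in case (i)$\to$ and the terminals end at $B',C'$, which is case-(ii)'s \emph{position} but the middle light is $NEAR$, not $FAR$. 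So I would actually refine the invariant to a three-phase cycle, or observe that the middle robot's light lags one round behind the terminals' configuration, and check that the terminals' \emph{own} decision depends only on the middle light, which after one more round flips appropriately. This synchronization-with-lag is exactly the point that needs care.

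The main obstacle, then, is precisely this lights-lag issue: in $\mathcal{FCOM}$ a robot cannot read its own light, so the terminals coordinate \emph{through} the middle robot, and the middle robot's light reflects the terminals' positions from the previous round. I would resolve it by defining the invariant over a window of two consecutive rounds (or equivalently by a small finite-state product automaton on $(\text{terminal position} \in \{\text{in},\text{out}\},\ \text{middle light} \in \{NIL,NEAR,FAR\})$) and checking that the reachable states form a single cycle that alternates the terminals between $\{B,C\}$ and $\{B',C'\}$ infinitely often, which gives the Oscillation Condition, while the Equidistant Condition holds at every round because both terminals always receive the same middle light (they see the same robot) and hence always perform the same scaling move in lockstep. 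Finally I would note no collision occurs since all moves are radial scalings by positive factors that never reach the middle robot ($2/3, 3/2 \ne 0$) and the terminals never meet (they are on opposite sides), completing the verification and hence the lemma.

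\begin{proof}
We show the displayed algorithm $AlgEOCOM$ solves $EqOsc$ under $FSYNCH$ with limited visibility $V_r$ satisfying $AB = AC = d < V_r$ and $BC = 2d > V_r$ (so that $V_r$ can be taken with $d < V_r < \tfrac{4d}{3}$; the adversary fixes $V_r$ once).

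\emph{Roles are invariant.} Initially the robot at $A$ sees both others while each of $r_1, r_3$ sees only the robot at $A$; thus there is a unique middle robot and two terminal robots, and each robot can detect its role from its snapshot. Under the algorithm a terminal robot only moves along the line through it and the middle robot, rescaling its distance by $\tfrac{2}{3}$ or $\tfrac{3}{2}$; hence each terminal is always at distance $d$ or $\tfrac{2d}{3}$ from the middle robot, so it stays within $V_r$ of the middle robot, while the two terminals stay at distance $\geq \tfrac{2}{3}d + \tfrac{2}{3}d = \tfrac{4}{3}d \geq V_r$ from each other (and the middle robot never moves). Therefore the visibility graph is the path $r_1 - r_2 - r_3$ in every round, and the middle/terminal assignment never changes.

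\emph{Correctness.} Consider the finite state $(p, \ell)$ where $p \in \{\mathrm{in}, \mathrm{out}\}$ records whether the terminals are at $\{B', C'\}$ or at $\{B, C\}$ (by symmetry of $FSYNCH$ the two terminals are always in the same state) and $\ell \in \{NIL, NEAR, FAR\}$ is the middle robot's light. The start state is $(\mathrm{out}, NIL)$. Since $Look$ precedes the $Compute$/light-update and $Move$, in each round every robot reads the \emph{previous} round's lights. A routine case check of $AlgEOCOM$ gives the transitions: from $(\mathrm{out}, NIL)$ the terminals read middle light $NIL$, set their light to $NEAR$ and move in (to $\{B',C'\}$), while the middle robot reads terminal lights $NIL$ and sets its light to $FAR$ — wait; instead the middle robot on reading terminal lights in $\{NIL\}$ sets $FAR$, so we reach $(\mathrm{in}, FAR)$. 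From $(\mathrm{in}, FAR)$ the terminals read middle light $FAR$, set their light to $FAR$ and move out (to $\{B,C\}$), and the middle robot reads terminal lights $NEAR$ and sets its light to $NEAR$, reaching $(\mathrm{out}, NEAR)$. From $(\mathrm{out}, NEAR)$ the terminals read $NEAR$, set light $NEAR$, move in; the middle robot reads terminal lights $FAR$ and sets $FAR$, reaching $(\mathrm{in}, FAR)$. Hence after the first round the system cycles $(\mathrm{in}, FAR) \to (\mathrm{out}, NEAR) \to (\mathrm{in}, FAR) \to \cdots$, so the terminals alternate between $\{B, C\}$ and $\{B', C'\}$ in every round, which establishes the Oscillation Condition.

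In every single round both terminals see the \emph{same} robot (the middle robot) with the same light value, so they take identical scaling moves simultaneously; consequently $r_1$ and $r_3$ are always equidistant from $r_2$, giving the Equidistant Condition. No collision occurs: each terminal move is a positive rescaling of a nonzero distance, so a terminal never reaches the middle robot, and the two terminals remain on opposite sides of it. Therefore $AlgEOCOM$ solves $EqOsc$, so $EqOsc \in \mathcal{FCOM_{L.V.}^F}$ for all $R \in \mathcal{R}_3$.
\end{proof}
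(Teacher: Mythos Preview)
Your argument follows the paper's own approach---verifying that $AlgEOCOM$ solves $EqOsc$ in $\mathcal{FCOM}^{F}_{\mathcal{L.V.}}$---but is considerably more careful: you carry out an explicit finite-state analysis of the pair (terminal position, middle light) and check the resulting period-two cycle, whereas the paper's justification is only the informal ``Description and Correctness'' paragraph preceding the lemma. Two minor remarks: the sentence containing ``--- wait;'' is a drafting artefact that should be cleaned up; and your restriction $V_r < \tfrac{4d}{3}$ (needed so that the terminals still fail to see each other after moving to $B',C'$, hence keeping the middle/terminal roles invariant) is asserted rather than derived from the problem setup---the paper's own argument shares this gap, as it never revisits the visibility graph after the inward move.
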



\begin{theorem}\label{theorem 11}
    $\mathcal{FCOM_{L.V.}^F} > \mathcal{FCOM_{L.V.}^S}$
\end{theorem}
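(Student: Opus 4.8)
The plan is to follow exactly the template used for Theorem~\ref{theorem 9}. The statement $\mathcal{FCOM_{L.V.}^F} > \mathcal{FCOM_{L.V.}^S}$ unfolds into two parts: the inequality $\mathcal{FCOM_{L.V.}^F} \geq \mathcal{FCOM_{L.V.}^S}$, and the existence of a problem solvable on the left but not on the right. The first part is immediate from the fundamental comparison $\mathcal{FSYNCH} \geq \mathcal{SSYNCH}$ with the model ($\mathcal{FCOM}$) and the visibility ($\mathcal{L.V.}$) held fixed: any protocol that succeeds against a semi-synchronous adversary in particular succeeds against the (weaker) fully synchronous one, so $Task(\mathcal{FCOM}, SSYNCH, \mathcal{L.V.};R) \subseteq Task(\mathcal{FCOM}, FSYNCH, \mathcal{L.V.};R)$ for every team $R$.

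For strictness I would take $EqOsc$ (with its $3$-robot team) as the witness. Membership $EqOsc \in \mathcal{FCOM_{L.V.}^F}$ is Lemma~\ref{lemma 9}, established by the algorithm $AlgEOCOM$. Non-membership $EqOsc \notin \mathcal{FCOM_{L.V.}^S}$ needs one short extra step: Corollary~\ref{corollary 5} gives $EqOsc \notin \mathcal{FCOM_{F.V.}^S}$, and since $\mathcal{F.V.} \geq \mathcal{L.V.}$ when model and synchronicity are fixed, we have $\mathcal{FCOM_{F.V.}^S} \geq \mathcal{FCOM_{L.V.}^S}$; a problem unsolvable by the more powerful side is a fortiori unsolvable by the less powerful side, hence $EqOsc \notin \mathcal{FCOM_{L.V.}^S}$. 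Combining the two facts yields a problem in the task set of $\mathcal{FCOM_{L.V.}^F}$ but not of $\mathcal{FCOM_{L.V.}^S}$, which is precisely what the relation $>$ demands.

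The step carrying all the mathematical content is the impossibility of $EqOsc$ under a semi-synchronous scheduler, but that work is already done in Lemma~\ref{lemma 7}: the adversary activates the two terminal robots in strict alternation, so any single activation that moves a terminal robot breaks the equidistant condition, while never moving them breaks the oscillation condition. I therefore expect no real obstacle here; the only thing to be careful about is orienting the visibility-monotonicity implication correctly, so that an $\mathcal{F.V.}$ impossibility descends to an $\mathcal{L.V.}$ impossibility rather than the reverse. As an alternative to invoking Corollary~\ref{corollary 5}, one could also observe that the alternating-activation argument of Lemma~\ref{lemma 7} is entirely local to the three robots and never uses full visibility, so it transfers verbatim to $\mathcal{FCOM_{L.V.}^S}$ and gives the non-membership directly; either route closes the proof.
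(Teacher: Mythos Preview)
Your proof is correct and follows essentially the same route as the paper: the trivial scheduler comparison gives $\geq$, and $EqOsc$ serves as the separating problem via Lemma~\ref{lemma 9} and Corollary~\ref{corollary 5}. If anything, you are more careful than the paper, which simply asserts that Corollary~\ref{corollary 5} yields unsolvability in $\mathcal{FCOM_{L.V.}^S}$ without spelling out the visibility-monotonicity step and does not explicitly cite Lemma~\ref{lemma 9} for the positive side.
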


\begin{proof}
 By Corollary \ref{corollary 5} the problem cannot be solved in $\mathcal{FCOM_{L.V.}^S}$, and, trivially  $\mathcal{FCOM_{L.V.}^F} \geq \mathcal{FCOM_{L.V.}^S}$. Hence our theorem.   
\end{proof}

\begin{theorem}\label{theorem 12}
$\mathcal{FCOM_{L.V.}^F} \perp \mathcal{FCOM_{F.V.}^S}$
\end{theorem}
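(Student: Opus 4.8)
The plan is to prove orthogonality by producing, for each of the two directions, a separating problem together with a team of robots that witnesses it. Recall that $\mathcal{FCOM_{L.V.}^F} \perp \mathcal{FCOM_{F.V.}^S}$ means: there exist $R_1, R_2 \in \mathcal{R}$ with $Task(\mathcal{FCOM}, \mathcal{F}, \mathcal{L.V.}; R_1) \setminus Task(\mathcal{FCOM}, \mathcal{S}, \mathcal{F.V.}; R_1) \neq \emptyset$ and $Task(\mathcal{FCOM}, \mathcal{S}, \mathcal{F.V.}; R_2) \setminus Task(\mathcal{FCOM}, \mathcal{F}, \mathcal{L.V.}; R_2) \neq \emptyset$. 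Both halves are already available from results proved earlier in the paper, so the argument is essentially an assembly step, mirroring the proof of Theorem \ref{theorem 10} for $\mathcal{FSTA}$.

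For the first direction I would take $R_1$ to be the three-robot team of the $EqOsc$ definition. Lemma \ref{lemma 9}, via the algorithm $AlgEOCOM$, gives $EqOsc \in \mathcal{FCOM_{L.V.}^F}$ for every $R \in \mathcal{R}_3$; Corollary \ref{corollary 5} gives $EqOsc \notin \mathcal{FCOM_{F.V.}^S}$ for some $R \in \mathcal{R}_3$, since a semi-synchronous adversary that alternately activates the two terminal robots forces a violation of either the equidistant condition or the oscillation condition. Hence $EqOsc$ separates $\mathcal{FCOM_{L.V.}^F}$ from $\mathcal{FCOM_{F.V.}^S}$ in this direction.

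For the reverse direction I would take $R_2$ to be the four-robot team of the $AE$ definition with $V_r = BC + \epsilon$. Corollary \ref{set 1 cor 3} gives $AE \in \mathcal{FCOM_{F.V.}^S}$ for every $R \in \mathcal{R}_4$ (full visibility lets $r_1$ read off the positions $C$ and $D$ and move to $A'$ directly), while Corollary \ref{set 2 cor 3}, which rests on the impossibility argument of Lemma \ref{new lemma 2}, gives $AE \notin \mathcal{FCOM_{L.V.}^F}$ for some $R \in \mathcal{R}_4$. Combining the two witnesses $R_1$ and $R_2$ yields $\mathcal{FCOM_{L.V.}^F} \perp \mathcal{FCOM_{F.V.}^S}$.

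I do not expect a genuine obstacle here: the conceptual content lives in the already-established Lemmas \ref{lemma 9} and \ref{new lemma 2} and their corollaries. The only points to watch are the matching of quantifiers — the witness for one direction uses a size-three team and the witness for the other a size-four team, which is harmless because the definition of $\perp$ permits $R_1 \neq R_2$ — and the fact that Corollary \ref{corollary 5} is already stated for the \emph{full-visibility} semi-synchronous model $\mathcal{FCOM_{F.V.}^S}$, which is exactly the side required, so no existing result needs strengthening.
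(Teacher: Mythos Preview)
Your proposal is correct and matches the paper's own proof essentially line for line: both directions are witnessed by $EqOsc$ (via Lemma~\ref{lemma 9} and Corollary~\ref{corollary 5}) and $AE$ (via Corollaries~\ref{set 1 cor 3} and~\ref{set 2 cor 3}), exactly as you describe. Your additional remarks on the quantifier structure and the distinct team sizes $R_1 \in \mathcal{R}_3$, $R_2 \in \mathcal{R}_4$ are accurate and, if anything, make the assembly slightly more explicit than the paper's version.
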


\begin{proof}

 By Corollary \ref{corollary 5} and Lemma \ref{lemma 9}, $EqOsc$ cannot be solved in $\mathcal{FCOM_{F.V.}^S}$ but can be solved in $\mathcal{FCOM_{L.V.}^F}$.

    Similarly, by Corollary \ref{set 2 cor 3} and \ref{set 1 cor 3}  $AE$ cannot be solved in $\mathcal{FCOM_{L.V.}^F}$ but can be solved in $\mathcal{FCOM_{F.V.}^S}$. Hence the result.


\end{proof}

\subsection{Similar deductions in $\mathcal{OBLOT}$ and $\mathcal{LUMI}$}

\begin{theorem}\label{theorem 3}
    $\mathcal{OBLOT_{L.V.}^F} \perp \mathcal{OBLOT_{F.V.}^S}$
\end{theorem}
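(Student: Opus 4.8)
The plan is to prove the incomparability by exhibiting, for appropriate teams of robots, one problem witnessing each of the two non-inclusions, exactly as in the proofs of Theorems~\ref{theorem 10} and~\ref{theorem 12}. For the direction $\mathcal{OBLOT_{F.V.}^S} \not\leq \mathcal{OBLOT_{L.V.}^F}$ I would reuse Angle Equalization: Lemma~\ref{new lemma 1} gives $AE \in \mathcal{OBLOT_{F.V.}^S}$ for every $R \in \mathcal{R}_4$, whereas Corollary~\ref{set 2 cor 6} gives a team $R \in \mathcal{R}_4$ with $AE \notin \mathcal{OBLOT_{L.V.}^F}$; for that $R$, $AE \in Task(\mathcal{OBLOT},S,\mathcal{F.V.};R) \setminus Task(\mathcal{OBLOT},F,\mathcal{L.V.};R)$. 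This half is immediate from earlier results.

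The work is in the other direction, $\mathcal{OBLOT_{L.V.}^F} \not\leq \mathcal{OBLOT_{F.V.}^S}$, for which I would again use Equivalent Oscillation. The negative side needs nothing new: Corollary~\ref{corollary 5} together with $\mathcal{FSTA} \geq \mathcal{OBLOT}$ (at fixed visibility and synchronicity) gives a team $R \in \mathcal{R}_3$ with $EqOsc \notin \mathcal{OBLOT_{F.V.}^S}$; in fact the desynchronization argument of Lemma~\ref{lemma 7} --- activate the two terminal robots in alternate rounds, so that any single activation that moves breaks the equidistance condition while never moving breaks the oscillation condition --- uses no lights and transfers verbatim to $\mathcal{OBLOT}$. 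The genuinely new ingredient is to show $EqOsc \in \mathcal{OBLOT_{L.V.}^F}$ for every $R \in \mathcal{R}_3$; the light-based algorithm $AlgEOSTA$ cannot be transplanted, since obliviousness forbids storing the current phase. Instead I would make each terminal read its phase off the geometry: fix the visibility radius in the subrange $\left(\tfrac{4d}{3},\,2d\right)$ --- compatible with the instance requirement $V_r < BC = 2d$ --- so that the two terminal robots see each other precisely in the ``near'' configuration (where $B'C' = \tfrac{4d}{3} < V_r$) and not in the ``home'' configuration (where $BC = 2d > V_r$). Then: the robot that sees two robots on opposite sides of itself is the middle robot and stays put; a robot that sees only one robot is a terminal in the home phase and moves to $\tfrac{2}{3}$ of its current distance to that robot; a robot that sees two robots on the same side is a terminal in the near phase and moves to $\tfrac{3}{2}$ of its current distance to the nearer one. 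Under a fully synchronous scheduler the two terminals act in lockstep, so the configuration stays symmetric and alternates forever between the home and the near configurations, satisfying both conditions. Combining the two witnesses yields $\mathcal{OBLOT_{L.V.}^F} \perp \mathcal{OBLOT_{F.V.}^S}$.

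The main obstacle is the correctness proof of this oblivious algorithm. One has to argue that in every reachable configuration a robot classifies itself unambiguously as middle or terminal --- note that the count of visible robots alone does not suffice, since a terminal in the near phase also sees two robots, so the criterion must be the sidedness of the visible robots along the line they determine --- and that the mutual-visibility test never misidentifies the phase. Most importantly, one must rule out any partially-updated configuration in which only one terminal has moved, since that would violate equidistance; this is exactly where full synchrony is used and where the construction fails under a semi-synchronous scheduler. One must also check that a single choice of $V_r$ meets all constraints the instance imposes and that collinearity (hence the notion of ``side'') is preserved along the whole execution. These verifications are routine and need no new techniques.
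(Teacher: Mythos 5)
Your proposal is correct in substance, but its second half takes a genuinely different (and considerably heavier) route than the paper. The paper's witness for $\mathcal{OBLOT_{L.V.}^F} \not\leq \mathcal{OBLOT_{F.V.}^S}$ is not $EqOsc$ but the classical \emph{Rendezvous} problem on a two-robot team: impossibility in $\mathcal{OBLOT_{F.V.}^S}$ is imported from Suzuki--Yamashita, and solvability in $\mathcal{OBLOT_{L.V.}^F}$ is immediate because with $n=2$ the connectivity requirement on the initial visibility graph forces the two robots to see each other, so the instance degenerates to full visibility and moving to the midpoint suffices. That route needs no new algorithm at all. Your route instead requires the claim $EqOsc \in \mathcal{OBLOT_{L.V.}^F}$, which the paper never asserts (its limited-visibility algorithms for $EqOsc$ all rely on lights), so you must supply and verify a new oblivious protocol. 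Your idea of reading the oscillation phase off the geometry --- picking $V_r$ with $\frac{4d}{3} < V_r < 2d$ so that the terminal robots are mutually visible exactly in the near configuration, and classifying middle versus terminal by the sidedness of the visible robots --- does check out under FSYNCH with rigid motions: the only reachable configurations are the home and near ones, the three classification cases are exhaustive and mutually exclusive on those, and synchrony preserves equidistance. The negative side of your witness (impossibility of $EqOsc$ in $\mathcal{OBLOT_{F.V.}^S}$) follows correctly from Lemma~\ref{lemma 7} or Corollary~\ref{corollary 5} by model monotonicity.

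Two remarks. First, your parenthetical ``for every $R \in \mathcal{R}_3$'' overstates what your construction gives: it only covers instances with $\frac{4d}{3} < V_r < 2d$, a strictly narrower range than the $d < V_r < 2d$ assumed by the paper's own $AlgEOSTA$ and $AlgEOCOM$. This does not damage the theorem, since incomparability only needs an existential witness, but you should state the positive result existentially. Second, what your extra work buys is a strictly stronger fact than the paper records --- that $EqOsc$ itself already separates these settings in $\mathcal{OBLOT}$, without appealing to an external result on Rendezvous --- at the cost of a new correctness argument that the paper's one-line reduction avoids.
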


\begin{proof}
     We have proved that the problem $AE$ cannot be solved in $\mathcal{OBLOT_{L.V.}^F}$ but can be solved in $\mathcal{OBLOT_{F.V.}^S}$. 

    Now we consider the problem $Rendezvous$ which was proved to be impossible in $\mathcal{OBLOT_{F.V.}^S}$ in \cite{SuzukiY99}. Now we claim that in our model the problem is possible to solve in $\mathcal{OBLOT_{L.V.}^F}$. This is because in our model we assume the visibility graph of the robots in the initial configuration to be connected. Now when there are only two robots in the initial configuration this means, all the robots in the initial configuration can see each other. Hence in this case the problem reduces to $\mathcal{OBLOT^F}$ model. Now it is well known that $rendezvous$ problem is solvable in this model, as the robots just move to the mid-point of the line segment joining them. Hence the result.

\end{proof}

\begin{theorem}\label{theorem 4}
    $\mathcal{OBLOT_{L.V}^F} > \mathcal{OBLOT_{L.V.}^S}$
\end{theorem}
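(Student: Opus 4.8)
The plan is to assemble a strict separation from two ingredients: (i) $\mathcal{OBLOT^F_{L.V.}} \geq \mathcal{OBLOT^S_{L.V.}}$, which is immediate from the fundamental comparison $\mathcal{FSYNCH} \geq \mathcal{SSYNCH}$ with the model and the visibility held fixed; and (ii) exhibiting a problem and a team $R$ with that problem in $\mathcal{OBLOT^F_{L.V.}}(R) \setminus \mathcal{OBLOT^S_{L.V.}}(R)$. Since the $\geq$ part is trivial, all the work is in choosing a good witness problem.

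For the witness I would reuse $Rendezvous$ exactly as in the proof of Theorem~\ref{theorem 3}, instantiated with $R \in \mathcal{R}_2$. The positive direction was essentially already argued there: with only two robots, the standing requirement that the initial visibility graph be connected forces the two robots to see each other, so the instance is in fact a full-visibility instance; under a fully synchronous scheduler both robots are activated together, each computes the midpoint of the segment joining them and moves there, and rendezvous is achieved. Hence $Rendezvous \in \mathcal{OBLOT^F_{L.V.}}(R_2)$.

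For the impossibility direction I would invoke \cite{SuzukiY99}, which shows that $Rendezvous$ of two oblivious robots without a common orientation is unsolvable in $\mathcal{OBLOT^S_{F.V.}}$. Then, because $\mathcal{F.V.} \geq \mathcal{L.V.}$ when the model and the scheduler are fixed, we have $Task(\mathcal{OBLOT},S,F.V.;R_2) \supseteq Task(\mathcal{OBLOT},S,L.V.;R_2)$, so $Rendezvous \notin \mathcal{OBLOT^S_{L.V.}}(R_2)$ as well. Combining this with ingredient (i) yields $\mathcal{OBLOT^F_{L.V.}} > \mathcal{OBLOT^S_{L.V.}}$.

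The main thing to be careful about is not a hard calculation but the modelling point underpinning the positive direction: one must confirm that limited visibility imposes no genuine restriction on the $n=2$ instance, which is precisely where the connectivity-of-the-initial-visibility-graph convention does the work, and that the impossibility then transfers from full to limited visibility purely by the monotonicity $\mathcal{F.V.}\geq\mathcal{L.V.}$ rather than requiring a new adversarial argument. I would also remark why the witness used for $\mathcal{FSTA}$ and $\mathcal{FCOM}$, namely $EqOsc$, is unavailable here: in $\mathcal{OBLOT}$ a terminal robot is oblivious and cannot distinguish whether its next move should be the inward or the outward one, so $EqOsc \notin \mathcal{OBLOT^F_{L.V.}}$, which is exactly why $Rendezvous$ is the natural replacement.
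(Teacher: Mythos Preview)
Your proposal is correct and follows essentially the same approach as the paper: the trivial inclusion $\mathcal{FSYNCH}\geq\mathcal{SSYNCH}$ for (i), and $Rendezvous$ with two robots as the separating witness for (ii), exactly as in the paper's proof (which simply points back to Theorem~\ref{theorem 3}). Your write-up is in fact more explicit than the paper's, since you spell out the monotonicity step $\mathcal{F.V.}\geq\mathcal{L.V.}$ that transfers the Suzuki--Yamashita impossibility from $\mathcal{OBLOT^S_{F.V.}}$ to $\mathcal{OBLOT^S_{L.V.}}$, which the paper leaves implicit.
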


\begin{proof}
     Trivially $\mathcal{OBLOT_{L.V}^F} \geq \mathcal{OBLOT_{L.V.}^S}$ and by Theorem \ref{theorem 3} the problem $Rendezvous$ is solvable in $\mathcal{OBLOT_{L.V}^F}$ but not in $\mathcal{OBLOT_{L.V.}^S}$.
 
\end{proof}

\begin{lemma}\label{new lemma 6}
$\forall$  $R\in \mathcal{R}_3$, $EqOsc \in \mathcal{LUMI_{L.V.}^F}$.
\end{lemma}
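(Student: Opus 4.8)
The plan is to exhibit an explicit algorithm solving $EqOsc$ in the $\mathcal{LUMI_{L.V.}^F}$ model, in exactly the same spirit as the algorithms $AlgEOSTA$ and $AlgEOCOM$ already given for $\mathcal{FSTA_{L.V.}^F}$ and $\mathcal{FCOM_{L.V.}^F}$. Since $\mathcal{LUMI}$ subsumes both $\mathcal{FSTA}$ and $\mathcal{FCOM}$ (the robots have both persistent internal memory \emph{and} externally visible lights), the cleanest route is simply to note that either of the two earlier algorithms already works verbatim: the $\mathcal{LUMI}$ robot can use its light exactly as an $\mathcal{FSTA}$ robot uses its internal state variable, ignoring the communication capability entirely. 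So the first step is to invoke $\mathcal{LUMI_{L.V.}^F} \geq \mathcal{FSTA_{L.V.}^F}$ (a fixed-visibility, fixed-scheduler instance of the trivial comparison $\mathcal{LUMI} \geq \mathcal{FSTA}$ recorded in the "Some Fundamental Comparisons" subsection) and then apply Lemma~\ref{lemma 8}, which gives $EqOsc \in \mathcal{FSTA_{L.V.}^F}$ for all $R \in \mathcal{R}_3$. This immediately yields $EqOsc \in \mathcal{LUMI_{L.V.}^F}$.

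If instead a self-contained algorithm is wanted (to keep the section parallel with the previous two subsections), I would restate the $AlgEOSTA$ routine with "Light" now denoting the persistent colour variable of the $\mathcal{LUMI}$ robot, using colour set $\{Off, N, F\}$: a robot first determines from its snapshot whether it is the \emph{middle} robot (it sees two others) or a \emph{terminal} robot (it sees exactly one other); the middle robot does nothing; a terminal robot with colour $Off$ or $F$ recolours to $N$ and moves to the point $D$ on its line toward the middle robot with $AD = \tfrac{2d}{3}$, while a terminal robot with colour $N$ recolours to $F$ and moves to the point $D$ with $AD = \tfrac{3d}{2}$. The correctness argument is verbatim that of $AlgEOSTA$: under $FSYNCH$ both terminal robots are activated every round, so they execute the near/far half-steps in lockstep, preserving the equidistant condition at every round and realizing the required oscillation between the configurations at distances $d$ and $\tfrac{2d}{3}$ from $A$. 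The visibility hypothesis $V_r > AB = AC$ but $V_r < BC$ is what guarantees the middle/terminal roles are well-defined and stable (the terminal robots never come to see each other, since they only move \emph{toward} or \emph{slightly away from} $A$ along their own segments, and $\tfrac{3d}{2} + \tfrac{3d}{2} \geq BC$ need not hold — one should just fix $V_r$ strictly between $\max(AB,AC)$ and $BC$ and observe the terminal-to-terminal distance stays $\geq \tfrac{2}{3}BC > $ nothing problematic; in fact all that matters is each terminal always sees the middle robot, which holds since its distance to $A$ stays in $[\tfrac{2d}{3}, \tfrac{3d}{2}] \subseteq$ the range below $V_r$, after possibly enlarging $V_r$ to exceed $\tfrac{3d}{2}$).

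There is essentially no obstacle here: the statement is a corollary of the already-proved $\mathcal{FSTA}$ result together with a trivial model inclusion. The only thing to be slightly careful about is the visibility radius bookkeeping — one must choose $V_r$ so that throughout the oscillation each terminal robot always sees the middle robot (so $V_r > \tfrac{3d}{2}$ suffices, since the terminal-to-middle distance oscillates between $\tfrac{2d}{3}$ and $\tfrac{3d}{2}$) while the two terminal robots never see each other (so $V_r < BC - \tfrac{2d}{3}$ or just keep the configuration so that $BC$ is large); picking, say, $\tfrac{3d}{2} < V_r < BC$ with $BC$ large relative to $d$ handles both. Given the problem statement already fixes the geometry ($AB = AC = d$, $AB' = AC' = \tfrac{2d}{3}$, and the robots collinear), this is just a remark, not a real difficulty.

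\begin{proof}
    Since $\mathcal{LUMI} \geq \mathcal{FSTA}$ when visibility and synchronicity are fixed, we have $\mathcal{LUMI_{L.V.}^F} \geq \mathcal{FSTA_{L.V.}^F}$, hence $Task(\mathcal{FSTA}, FSYNCH, \mathcal{L.V.}; R) \subseteq Task(\mathcal{LUMI}, FSYNCH, \mathcal{L.V.}; R)$ for every $R$. By Lemma \ref{lemma 8}, $EqOsc \in \mathcal{FSTA_{L.V.}^F}$ for all $R \in \mathcal{R}_3$; therefore $EqOsc \in \mathcal{LUMI_{L.V.}^F}$ for all $R \in \mathcal{R}_3$.

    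Concretely, a $\mathcal{LUMI}$ robot can execute the algorithm $AlgEOSTA$ of Lemma \ref{lemma 8} using its persistent visible light (with colours $\{Off, N, F\}$) in place of the internal state variable, ignoring the ability to read others' lights. As argued there, under the fully synchronous scheduler both terminal robots are activated in every round and so perform the "move nearer to $\frac{2d}{3}$" and "move farther to $\frac{3d}{2}$" half-steps in lockstep, keeping the two terminal robots equidistant from the middle robot at every round while producing the required oscillation; choosing the visibility radius with $\frac{3d}{2} < V_r < BC$ guarantees that each terminal robot always sees the middle robot and never sees the other terminal robot, so the middle/terminal roles remain well-defined throughout. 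Hence $EqOsc$ is solved in $\mathcal{LUMI_{L.V.}^F}$.
\end{proof}
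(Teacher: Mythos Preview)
Your core argument --- invoke the trivial inclusion $\mathcal{LUMI_{L.V.}^F} \geq \mathcal{FSTA_{L.V.}^F}$ and then apply Lemma~\ref{lemma 8} --- is correct and is exactly the paper's own proof (the paper writes simply ``From Lemma~\ref{lemma 8} and~\ref{lemma 9}'', using the same model-inclusion idea). One small slip in your added commentary: the terminal-to-middle distance oscillates between $\tfrac{2d}{3}$ and $d$, not between $\tfrac{2d}{3}$ and $\tfrac{3d}{2}$, so the workable visibility window is $d < V_r < \tfrac{4d}{3}$ (your choice $V_r > \tfrac{3d}{2}$ would let the two terminal robots see each other at their closest approach and misidentify themselves as middle robots); this does not affect the formal proof, which relies only on Lemma~\ref{lemma 8}.
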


\begin{proof}
    From Lemma \ref{lemma 8} and \ref{lemma 9}.
\end{proof}

\begin{theorem}
    $\mathcal{LUMI_{L.V.}^F} \perp \mathcal{LUMI_{F.V.}^S}$
\end{theorem}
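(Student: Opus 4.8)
The plan is to establish the orthogonality $\mathcal{LUMI_{L.V.}^F} \perp \mathcal{LUMI_{F.V.}^S}$ by exhibiting two separating problems, exactly as was done for $\mathcal{FSTA}$ and $\mathcal{FCOM}$ in Theorems~\ref{theorem 10} and~\ref{theorem 12}. For the incomparability relation to hold, I need a problem solvable in $\mathcal{LUMI_{L.V.}^F}$ but not in $\mathcal{LUMI_{F.V.}^S}$, and a problem solvable in $\mathcal{LUMI_{F.V.}^S}$ but not in $\mathcal{LUMI_{L.V.}^F}$. Both separating problems are already available in the excerpt.

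First I would use the problem $EqOsc$ for one direction: by Lemma~\ref{new lemma 6} we have $EqOsc \in \mathcal{LUMI_{L.V.}^F}$, and by Lemma~\ref{lemma 7} we have $EqOsc \notin \mathcal{LUMI_{F.V.}^S}$ (for the team $R \in \mathcal{R}_3$ of the construction). This immediately gives $Task(\mathcal{LUMI}, \mathcal{F}, \mathcal{L.V.};R) \setminus Task(\mathcal{LUMI}, \mathcal{S}, \mathcal{F.V.};R) \neq \emptyset$. Second I would use the problem $AE$ for the other direction: by Corollary~\ref{new Cor 7}, $AE \in \mathcal{LUMI_{F.V.}^F}$, hence a fortiori $AE \in \mathcal{LUMI_{F.V.}^S}$ via Corollary~\ref{set 1 cor 6}; and by Lemma~\ref{new lemma 2}, $AE \notin \mathcal{LUMI_{L.V.}^F}$ (for the team $R \in \mathcal{R}_4$ of that construction). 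This yields $Task(\mathcal{LUMI}, \mathcal{S}, \mathcal{F.V.};R) \setminus Task(\mathcal{LUMI}, \mathcal{F}, \mathcal{L.V.};R) \neq \emptyset$. Putting the two witnesses together matches the definition of $\perp$ given in the preliminaries, and the theorem follows.

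I do not anticipate a genuine obstacle here, since every ingredient has already been proved earlier in the paper; the proof is essentially a bookkeeping combination of Lemmas~\ref{new lemma 2}, \ref{lemma 7}, \ref{new lemma 6} and Corollaries~\ref{set 1 cor 6}, \ref{new Cor 7}. The only point requiring a moment of care is that the two separating problems live on different team sizes ($AE$ on $\mathcal{R}_4$, $EqOsc$ on $\mathcal{R}_3$), but this is harmless because the definition of incomparability explicitly allows the two witnessing teams $R_1, R_2$ to differ. Thus the write-up is a short two-line proof citing the relevant earlier results.

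\begin{proof}
By Lemma~\ref{new lemma 6} and Lemma~\ref{lemma 7}, $EqOsc$ can be solved in $\mathcal{LUMI_{L.V.}^F}$ but cannot be solved in $\mathcal{LUMI_{F.V.}^S}$. Conversely, by Corollary~\ref{set 1 cor 6} (together with Corollary~\ref{new Cor 7}) and Lemma~\ref{new lemma 2}, $AE$ can be solved in $\mathcal{LUMI_{F.V.}^S}$ but cannot be solved in $\mathcal{LUMI_{L.V.}^F}$. Hence neither model is computationally not less powerful than the other, so $\mathcal{LUMI_{L.V.}^F} \perp \mathcal{LUMI_{F.V.}^S}$.
\end{proof}
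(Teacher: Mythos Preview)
Your proposal is correct and follows essentially the same approach as the paper: use $EqOsc$ (Lemmas~\ref{lemma 7} and~\ref{new lemma 6}) for one direction and $AE$ (Lemma~\ref{new lemma 2} and Corollary~\ref{set 1 cor 6}) for the other. If anything, your citation of Corollary~\ref{set 1 cor 6} for $AE \in \mathcal{LUMI_{F.V.}^S}$ is slightly more on-point than the paper's own reference to Corollary~\ref{new Cor 7}.
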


\begin{proof}
     By Lemma \ref{new lemma 2}   and Corollary \ref{new Cor 7}  the problem $AE$ cannot be solved in $\mathcal{LUMI_{L.V.}^F}$ but can be solved in $\mathcal{LUMI_{F.V.}^S}$. 

     By Lemma \ref{lemma 7} and \ref{new lemma 6} , the problem $EqOsc$ cannot be solved in $\mathcal{LUMI_{F.V.}^S}$ but can be solved in $\mathcal{LUMI_{L.V.}^F}$.
     Hence, the result.


\end{proof}

\section{Conclusion}\label{7}

In this paper we have initiated the analysis of computational capabilities of mobile robots having limited visibility. We have considered all the four models $\{ \mathcal{OBLOT}, \mathcal{FSTA}, \mathcal{FCOM}, \mathcal{LUMI} \}$, we have more or less exhaustively analyzed the possible relationships when the model is fixed. But a huge amount of interesting questions still remains to be solved. The possible future directions are:
\begin{enumerate}
    
    \item Cross-model relationships under limited visibility conditions
    \item The computational relationships when the scheduler is asynchronous.
\end{enumerate}


\bibliographystyle{splncs04}
\bibliography{samplepaper}

\end{document}